\newcommand{\R}{\mathbb{R}}
\numberwithin{equation}{section}
\def\FH{Fr\'echet--Hoeffding\xspace}
\def\ud{\mathrm{d}}
\def\supp{\ensuremath{\mathrm{supp}}}
\DeclareMathAccent{\what}{\mathord}{largesymbols}{"62}
\DeclareFontFamily{U}{mathx}{\hyphenchar\font45}
\DeclareFontShape{U}{mathx}{m}{n}{
      <5> <6> <7> <8> <9> <10>
      <10.95> <12> <14.4> <17.28> <20.74> <24.88>
      mathx10
      }{}
\DeclareSymbolFont{mathx}{U}{mathx}{m}{n}
\DeclareMathAccent{\widecheck}{0}{mathx}{"71}
\newcommand*{\bigtimes}{\mathop{\raisebox{-.5ex}{\hbox{\huge{$\times$}}}}}
\begin{document}

\title{Detection of arbitrage opportunities in multi-asset\newline derivatives markets}

\author[a,b,1,s]{Antonis Papapantoleon}
\author[c,2,s]{Paulo Yanez Sarmiento}

% \address[a]{{Department of Mathematics, NTUA, Zografou Campus, 15780 Athens, Greece}}
\address[a]{{Delft Institute of Applied Mathematics, TU Delft, 2628 Delft, The Netherlands}}
\address[b]{Institute of Applied and Computational Mathematics, FORTH, 70013 Heraklion, Greece}
\address[c]{Institute of Mathematics, TU Berlin, Stra\ss e des 17. Juni 136, 10623 Berlin, Germany}

\eMail[1]{a.papapantoleon@tudelft.nl}
\eMail[2]{p.yanez@outlook.de}

\myThanks[s]{We thank Thibaut Lux for fruitful discussions during the work on these topics. AP gratefully acknowledges the financial support from the Hellenic Foundation for Research and Innovation Grant No. HFRI-FM17-2152.}

\abstract{
We are interested in the existence of equivalent martingale measures and the detection of arbitrage opportunities in markets where several multi-asset derivatives are traded simultaneously.
More specifically, we consider a financial market with multiple traded assets whose marginal risk-neutral distributions are known, and assume that several derivatives written on these assets are traded simultaneously.
In this setting, there is a bijection between the existence of an equivalent martingale measure and the existence of a copula that couples these marginals.
Using this bijection and recent results on improved \FH bounds in the presence of additional information on functionals of a copula by \citet{lux2016}, we can extend the results of \citet{Tav15} on the detection of arbitrage opportunities to the general multi-dimensional case.
More specifically, we derive sufficient conditions for the absence of arbitrage and formulate an optimization problem for the detection of a possible arbitrage opportunity.
This problem can be solved efficiently using numerical optimization routines.
The most interesting practical outcome is the following: we can construct a financial market where each multi-asset derivative is traded within its own no-arbitrage interval, and yet when considered together an arbitrage opportunity may arise.
}

\keyWords{Arbitrage, equivalent martingale measures, detection of arbitrage opportunities, multiple assets, multi-asset derivatives, copulas, improved \FH bounds.}

\keyAMSClassification{91G20, 62H05, 60E15.}
% \ArXiV{1610.09734}
\keyJELClassification{}

% \thanksColleagues{}

\date{} \maketitle \frenchspacing 

%% ====================================================================================================================================================================================== %%
%% ====================================================================================================================================================================================== %%

\section{Introduction}

We consider a financial market where multiple assets and several derivatives written on single or multiple assets are traded simultaneously. 
Assuming we are given a set of traded prices for these multi-asset derivatives, we are interested in whether there exists an arbitrage-free model that is consistent with these prices or not.
A consistent arbitrage-free model will exist if we can find an equivalent martingale measure such that we can describe these prices as discounted expected payoffs under this measure.
We assume that the marginal risk-neutral distributions of the assets are known, \textit{e.g.} they have been estimated from single-asset options prices using \citet{breeden}.
Then, there exists a bijection between the existence of an equivalent martingale measure and the existence of a copula that couples these marginal distributions.
Using recent results about improved \FH bounds on copulas in the presence of additional information, we can formulate a sufficient condition for the existence of a copula and thus for the absence of arbitrage in this financial market.
Moreover, the formulation of this condition as an optimization problem allows for the detection of an arbitrage opportunity via numerical optimization routines. 

Arbitrage is a fundamental concept in economics and finance, because the modern theory of option valuation is rooted on the assumption of the absence of arbitrage, while it is also closely related with notions of equilibrium in financial markets.
Arbitrage is also a concept of practical importance, as financial institutions are interested in ensuring that their systems for option valuation, simulation, scenario generation, \textit{etc}, are free of arbitrage, in order to be useful and relevant.
Therefore, topics related to the existence of arbitrage and the consistency of arbitrage-free models with given traded prices are of significant theoretical and practical interest.

There is a sufficiently rich literature by now devoted to the case where a single asset and options on this asset are traded in a financial market.
\citet{Laurent_Leisen_2000} in their pioneering work provide a procedure to check for the absence of arbitrage in a discrete set of market data.
\citet{Carr_Madan_2005} provide a sufficient condition for the absence of arbitrage in a market where countably-infinite many European options with discrete strikes can be traded.
These results where later generalized and extended by \citet{Cousot_2007}, by \citet{Buehler_2006}, and in particular by \citet{Davis_Hobson_2007} who provided necessary and sufficient conditions for the existence of an arbitrage-free model consistent with a set of market prices.
More recently, \citet{Gerhold_Gulum_2020} considered the same problem in case the only observables are the bid and ask prices of the underlying asset.

The literature is not that developed when one turns to multiple underlying assets and multi-asset derivatives. 
Actually, to the best of our knowledge, the only work treating this problem until very recently was \citet{Tav15}.
The setting here is exactly the same as in \citet{Tav15}, \textit{i.e.} the author considers multiple underlying assets with known risk-neutral marginals and several traded derivatives on multiple assets, and provides two methods for detecting arbitrage opportunities, one based on Bernstein copulas and another based on improved \FH bounds, which is however restricted to the two-asset case. 
In our work, we extend the results of \citet{Tav15} to the general multi-asset case using the recent results of \citet{lux2016} on improved \FH bounds for $d$-copulas in the presence of additional information on functionals of a copula, with $d\ge2$.

In a very recent work, \citet{Neufeld_Papapantoleon_Xiang_2020} developed numerical methods for the computation of model-free bounds for multi-asset option prices in the presence of other traded multi-asset derivatives.
Their results are supported by a Fundamental Theorem of Asset Pricing for their setting, and the numerical methods are used for the detection of arbitrage opportunities as well.
Compared to that work, our method is conceptually and computationally simpler, while both methods are numerically efficient.
However, our method only yields the existence of an arbitrage opportunity, while the method of \citet{Neufeld_Papapantoleon_Xiang_2020} also delivers the portfolio that generates the arbitrage.
Moreover, the two settings are not the same; here we assume that the marginals are known, while in \citet{Neufeld_Papapantoleon_Xiang_2020} the authors assume the existence of finitely many traded European options for each single asset.
Therefore, the two methods are not directly comparable.

Let us also mention that there exist several articles on the computation of model-free bounds for multi-asset option prices; see \textit{e.g.} \citet{Dhaene_etal_2002_a,Dhaene_etal_2002_b}, \citet{Hobson_Laurence_Wang_2005_1,Hobson_Laurence_Wang_2005_2}, \citet{dAspremont_ElGhaoui_2006}, \citet{Pena_Vera_Zuluaga_2010}, \citet{tankov} and \citet{lux2016} to mention a fraction of this literature, while we refer the reader to \cite{lux2016} and \cite{Neufeld_Papapantoleon_Xiang_2020} for more comprehensive literature reviews.
These methods could also be used, in principle, to detect arbitrage opportunities, although this has not been pursued yet. 

The remainder of this article is structured as follows:
In Section \ref{sec:main-properties} we review some necessary results about copulas, quasi-copulas and improved \FH bounds.
In Section \ref{sec:ISD-CQ} we present results on integration and stochastic dominance for quasi-copulas; these include also a new representation of the integral with respect to a quasi-copula that could be of independent interest.
In Section \ref{sec:CA} we revisit the bijection between the existence of an equivalent martingale measure and a copula that couples the marginals of the underlying assets already present in \citet{Tav15}, and derive necessary conditions for the absence of arbitrage in the presence of several multi-asset derivatives traded simultaneously.
In Section \ref{sec:appl} we apply our results in models with three underlying assets. 
In particular, we show that we can construct a financial market where each multi-asset derivative is traded within its own no-arbitrage interval, and yet when considered together an arbitrage opportunity may arise.
Finally, the appendix collects some additional results.

%% ====================================================================================================================================================================================== %%
%% ====================================================================================================================================================================================== %%

\section{Copulas, quasi-copulas and improved \FH bounds}
\label{sec:main-properties}

This section serves as an introduction to the notation that will be used throughout this work, as well as to the basic notion of copulas, quasi-copulas and (improved) \FH bounds.
Let $d\ge2$ be an integer, and set $\mathbb{I}=[0,1]$ and $\mathbf 1=(1,\dots,1)\in\R^d$.
In the sequel, boldface letters, such as $\mathbf{u}$ or $\mathbf{v}$, denote vectors in $\mathbb I^d$ or $\mathbb{R}^d$ with entries $u_1, \dots, u_d$ or $v_1, \dots, v_d$, while we distinguish strictly between $\subset$ and $\subseteq$, \textit{\textit{i.e.}} if $J\subset I$ then $J\neq I$.
Moreover, for a univariate distribution function $F$ we define its inverse as $F^{-1}(u):=\inf\{ x\in\mathbb{R}\,|\, F(x)\ge u\}$, while we call a function $f:\mathbb{R}^d\rightarrow\mathbb{R}$ left-continuous if it is left-continuous in each component.

\begin{definition}
A function $Q: \mathbb I^d\rightarrow \mathbb I$ is a \textit{$d$-quasi-copula} if it satisfies the following properties:
\begin{itemize}
\item[(C1)]	boundary condition: $Q(u_1, \dots, u_i=0, \dots, u_d)=0$, for all $1\le i\le d$. 
\item[(C2)] uniform marginals: $Q(1, \dots, 1, u_i, 1, \dots, 1)=u_i$, for all $1\le i\le d$.
\item[(C3)] $Q$ is non-decreasing in each component.
\item[(C4)] $Q$ is Lipschitz continuous, \textit{i.e.} for all $\mathbf{u},\mathbf{v}\in\mathbb I^d$
		\begin{align*}
			|Q(\mathbf{u})-Q(\mathbf{v})|\le \sum_{i=1}^d |u_i-v_i| \,.
		\end{align*}
\end{itemize}
Moreover, $Q$ is a $d$-\textit{copula} if it satisfies in addition:
\begin{itemize}
\item[(C5)] $Q$ is $d$-increasing.
\end{itemize}
\end{definition}

The set of all $d$-quasi-copulas is denoted by $\mathcal{Q}^d$ and the set of all $d$-copulas by $\mathcal{C}^d$. 
Obviously, $\mathcal{C}^d \subset \mathcal Q^d$.
Moreover, we call $Q\in\mathcal{Q}^d\setminus\mathcal{C}^d$ a \textit{proper} quasi-copula.
In case the dimension $d$ is clear, we refer to a $d$-(quasi-)copula as a (quasi-)copula.

There exists a clear link between copulas and probability distributions. 
In fact, for $C\in\mathcal{C}^d$ and univariate distribution functions $F_1, \dots, F_d$,
\begin{align}\label{eq:sklar}
	F(\mathbf{x}):=C\big(F_1(x_1), \dots, F_d(x_d)\big)
\end{align}
defines a $d$-dimensional distribution function with marginals $F_1, \dots, F_d$. 
The celebrated theorem of \citet{sklar} tells us that the converse is also true, \textit{i.e.} given a $d$-dimensional distribution function $F$ with univariate marginals $F_1, \dots, F_d$, there exists a copula $C$ such that \eqref{eq:sklar} holds true.
We will call $C$ the copula corresponding to $F$.

Let $Q\in\mathcal{Q}^d$. 
We define its \textit{survival function} $\widehat{Q}: \mathbb I^d\rightarrow\mathbb I$ as follows:
\begin{align}\label{eq:surv-copula}
	\widehat{Q}(\mathbf{u}) := V_Q\Big(\bigtimes_{i=1}^d(u_i,1]\Big),
\end{align}
and denote by $\widehat{\mathcal{C}}^d:=\{\widehat{C}\,|\,C\in\mathcal{C}^d$\}.
A well-known result states that if $C\in\mathcal C^d$, then $\mathbf u\mapsto \widehat C(\mathbf 1-\mathbf u)$ is again a copula, namely the \textit{survival copula} of $C$, while there exists also a version of Sklar's theorem for survival copulas. 
In case $Q$ is a proper quasi-copula, then $\mathbf u\mapsto \widehat Q(\mathbf 1-\mathbf u)$ is not a quasi-copula in general; see \textit{e.g.} Example 2.5 in \citet{lux2016}.

Let us now define a partial order on $\mathcal{Q}^d$, and thus also on $\mathcal{C}^d$.
\begin{definition}
Let $Q_1, Q_2\in\mathcal{Q}^d$.
\begin{itemize}%[(i)]
\item[(i)] If $Q_1(\mathbf{u})\le Q_2(\mathbf{u})$ for all $\mathbf{u}\in\mathbb I^d$, then $Q_1$ is smaller than $Q_2$ in the \textit{lower orthant order}, denoted by $Q_1 \preceq_{LO} Q_2$.
\item[(ii)] If $\widehat{Q}_1(\mathbf{u})\le \widehat{Q}_2(\mathbf{u})$ for all $\mathbf{u}\in\mathbb I^d$, then $Q_1$ is smaller than $Q_2$ in the \textit{upper orthant order}, denoted by $Q_1 \preceq_{UO} Q_2$.
\end{itemize}
\end{definition}

The celebrated \FH bounds provide upper and lower bounds for all quasi-copulas with respect to the lower orthant order. 
Indeed, for $Q\in\mathcal Q^d$, we have that
\begin{align*}
	W_d(\mathbf{u}) :=\max\bigg\{\sum_{i=1}^d u_i -d+1,0\bigg\} \le Q(\mathbf u) \le \min\{u_1, \dots, u_d\} =: M_d(\mathbf{u}),	
\end{align*}
for all $\mathbf u\in\mathbb I^d$, which readily implies that $W_d\preceq_{LO} C\preceq_{LO} M_d$.
$W_d$ and $M_d$ are respectively called the lower and upper Fr\'echet--Hoeffding bounds.
Analogous results hold true for the upper orthant order and the survival functions, \textit{i.e.} we have that
\begin{align*}%\label{survivalFHeq}
	W_d(\mathbf 1 - \mathbf u) \le \widehat{C}(\mathbf{u}) \le M_d(\mathbf 1-\mathbf u), \quad \text{ for all }\mathbf{u}\in\mathbb I^d\,,\
\end{align*}
while an easy computation shows that $M_d(\mathbf 1-\cdot)=\widehat M_d(\cdot)$ for all $d\ge2$, while $W_d(\mathbf 1-\cdot)=\widehat W_d(\cdot)$ only for $d=2$.

The \FH bounds are derived under the assumption that the marginal distributions are fully known and the copula is fully unknown.
However, in several applications such as finance and insurance, partial information on the copula is available from market data.
Therefore, there has been intensive research in the last decade on improving the \FH bounds by adding partial information on the copula, see \textit{e.g.} \citet{Nel06}, \citet{tankov}, \citet{puccetti2016}, \citet{lux2016,Lux_Papapantoleon_2019}; see also \citet{Rueschendorf_2018} and reference therein.
Earlier results in this direction, using the language of optimal transport, appear in \citet{Rachev_Rueschendorf_1994}.
The following results from \citet[Sec. 3]{lux2016} describe \textit{improved} \FH bounds under the assumption that the copula is known in a subset of its domain, or that a functional of the copula is known.
Analogous statements for survival copulas appear in \cite[Appendix A]{lux2016}.

Let $\mathcal{S}\subseteq[0,1]^d$ be compact and $Q^*\in\mathcal{Q}^d$. 
Define the set
\begin{align*}
	\mathcal{Q}^{\mathcal{S}, Q^*} := \big\{Q\in\mathcal{Q}^d\, |\, Q(\mathbf{x})=Q^*(\mathbf{x}) \,\,\textit{for all}\,\,\mathbf{x}\in\mathcal{S}\big\} \,.
\end{align*}
Then, for all $Q\in\mathcal{Q}^{\mathcal{S},Q^*}$
\begin{align*}
	Q_L^{\mathcal{S},Q^*}\preceq_{LO} Q\preceq_{LO} Q_U^{\mathcal{S},Q^*} \,,
\end{align*}
where the \textit{improved \FH bounds} $Q_L^{\mathcal{S},Q^*}, Q_U^{\mathcal{S},Q^*}\in\mathcal{Q}^d$ and are provided by
\begin{align*}
	Q_L^{\mathcal{S},Q^*}(\mathbf{u})&=\max\Big\{0,\sum_{i=1}^d u_i -d+1, \max_{\mathbf{x}\in\mathcal{S}}\big\{Q^*(\mathbf{x})-\sum_{i=1}^d (x_i-u_i)^+ \big\}\Big\}, \\
	Q_U^{\mathcal{S},Q^*}(\mathbf{u})&=\min\Big\{u_1, \dots, u_d, \min_{\mathbf{x}\in\mathcal{S}}\big\{Q^*(\mathbf{x})+\sum_{i=1}^d (u_i-x_i)^+ \big\} \Big\} \,.
\end{align*}

\begin{remark}
A natural question is whether the bounds $Q_L^{\mathcal{S},Q^*}$ and $Q_U^{\mathcal{S},Q^*}$ are copulas or proper quasi-copulas. 
\citet{Nel06} showed that in the case of $\mathcal{S}$ being a singleton and for $d=2$ the lower and upper improved Fr\'echet-Hoeffding bounds are copulas using the concept of shuffles of $M_2$. 
This statement was generalized by \citet{tankov} and \citet{bernard}, still for $d=2$, under certain ``monotonicity'' conditions.
On the contrary, \citet{lux2016} showed that for $d>2$, the improved \FH bounds are copulas only in trivial cases and proper quasi-copulas otherwise.
Moreover, \citet{Bartl_Kupper_Lux_Papapantoleon_2017} showed that the improved \FH bounds are not pointwise sharp (or best-possible), even in $d=2$, if the aforementioned ``monotonicity'' conditions are violated.
\end{remark}

The next result provides improved \FH bounds in case the value of a functional of the copula is known. 
Examples of functionals could be the correlation or another measure of dependence (\textit{e.g.} Kendall's $\tau$ or Spearman's $\rho$), but also prices of multi-asset options in a mathematical finance context.
Let $\rho:\mathcal{Q}^d\rightarrow\mathbb{R}$ be non-decreasing with respect to the lower orthant order and continuous with respect to the pointwise convergence of quasi-copulas, and consider the set of quasi-copulas
\begin{align}\label{eq:sets-q-rho-theta}
	\mathcal{Q}^{\rho,\theta} := \big\{ Q\in\mathcal{Q}^d \,|\, \rho(Q)=\theta \big\} \,,
\end{align}
for $\theta\in[\rho(W_d),\rho(M_d)]$.
Then, for all $Q\in\mathcal{Q}^{\rho,\theta}$, holds
\begin{align*}
	Q_L^{\rho,\theta} \preceq_{LO} Q \preceq_{LO} Q_U^{\rho,\theta}\,,
\end{align*}
where the \textit{improved \FH bounds} $Q_L^{\rho,\theta}, Q_U^{\rho,\theta}\in \mathcal Q^d$ are provided by
\begin{align} \label{eq:iFH-func-L}
Q_L^{\rho,\theta}(\mathbf{u}) 
	:=&\begin{cases} 
		\rho^{-1}_+(\mathbf{u},\theta)\,,\quad \textit{if}\,\, \theta\in[\rho_+(\mathbf{u}, W_d(\mathbf{u})),\rho(M_d)], \\
		W_d(\mathbf{u})\,, \quad\quad \textit{otherwise}\,,
	\end{cases} \\ \label{eq:iFH-func-U}
Q_U^{\rho,\theta}(\mathbf{u})
	:=&\begin{cases} 
		\rho^{-1}_-(\mathbf{u},\theta)\,,\quad \textit{if}\,\, \theta\in[\rho(W_d),\rho_-(\mathbf{u},M_d(\mathbf{u}))], \\
		M_d(\mathbf{u})\,, \quad\quad \textit{otherwise}.
	\end{cases}
\end{align}
The improved \FH bounds $Q_L^{\rho,\theta}$ and $Q_U^{\rho,\theta}$ are actually the infimum and supremum over the set $\mathcal Q^{\rho,\theta}$.
Here we use the following notation: for $\mathbf{u}\in[0,1]^d$, let $r\in I_{\mathbf{u}}=[W_d(\mathbf{u}),M_d(\mathbf{u})]$ and $Q^*\in\mathcal{Q}^d$ with $Q^*(\mathbf{u})=r$, and define $Q_L^{\{\mathbf{u}\},r}:=Q_L^{\{\mathbf{u}\},Q^*}, Q_U^{\{\mathbf{u}\},r}:=Q_U^{\{\mathbf{u}\},Q^*}$ and 
\begin{align*}
	\rho_-(\mathbf{u},r) := \rho\big(Q_L^{\{\mathbf{u}\},r}\big) 
		\quad \text{ and } \quad
	\rho_+(\mathbf{u},r) := \rho\big(Q_U^{\{\mathbf{u}\},r}\big) \,.
\end{align*}
Then, for fixed $\mathbf{u}$, the maps $r\mapsto\rho_-(\mathbf{u},r)$ and $r\mapsto\rho_+(\mathbf{u},r)$ are non-decreasing and continuous. 
Hence, we can define their inverse mappings 
\begin{align*}
	\theta\mapsto \rho_-^{-1}(\mathbf{u},\theta)&:=\max\{r\in I_{\mathbf{u}} :\rho_-(\mathbf{u},r)=\theta\}, \\
	\theta\mapsto \rho_+^{-1}(\mathbf{u},\theta)&:=\min\{r\in I_{\mathbf{u}} :\rho_+(\mathbf{u},r)=\theta\},
\end{align*}
for all $\theta$ such that the sets are non-empty.
Analogous statements for non-increasing functionals are relegated to Appendix \ref{app:B}.

%% ====================================================================================================================================================================================== %%
%% ====================================================================================================================================================================================== %%

\section{Integration and stochastic dominance for quasi-copulas}
\label{sec:ISD-CQ}

This section provides results on the definition of integrals with respect to quasi-copulas and on stochastic dominance for quasi-copulas.
These results are largely taken from \citet[Sec. 5]{lux2016}, however we also provide a new representation of the integral with respect to a quasi-copula, as well as some useful results on stochastic dominance for quasi-copulas.
Stochastic dominance results for copulas were introduced in \citet{Rueschendorf_1980}, see also the book of \citet{mueller}, while analogous results for quasi-copulas in the 2-dimensional case appear in \citet{tankov}.
A special case of the results on stochastic dominance presented below appears already in \citet{Rueschendorf_2004}.

Let $(\Omega,\mathcal{F},\mathbb{P})$ be a probability space. 
Consider an $\mathbb{R}^d_+$-valued random vector $\mathbf{X}=(X_1, \dots, X_d)$ with distribution function $F$ and marginals $F_1, \dots, F_d$. 
Then, from Sklar's Theorem, we know there exists a copula $C\in\mathcal{C}^d$ such that 
\begin{align*}
	\mathbb{P}(X_1<x_1, \dots, X_d<x_d) = C\big(F_1(x_1), \dots, F_d(x_d)\big)
\end{align*}
and
\begin{align*}
	\mathbb{P}(X_1>x_1, \dots, X_d>x_d) = \widehat{C}\big(F_1(x_1), \dots, F_d(x_d)\big) \,.
\end{align*}
Hence, there exists an induced measure $\ud C\big(F_1(x_1), \dots, F(x_d)\big)$ on $\mathbb{R}^d_+$. 
Consider a function $f: \mathbb{R}^d_+\rightarrow\mathbb{R}$. 
In this section we focus on calculating $\mathbb{E}[f(\mathbf{X})]$ and its properties with respect to $C$. 
Assuming the marginals are given, we define the \textit{expectation operator} $\pi_f$ as follows
\begin{equation}
	\begin{aligned}\label{expectationOperator}
\pi_f(C) :=\mathbb{E}[f(\mathbf{X})]
	&= \int\limits_{\mathbb{R}^d_+} f(x_1, \dots, x_d) \,\ud C\big(F_1(x_1), \dots, F_d(x_d)\big) \\
	&= \int\limits_{[0,1]^d} f\big(F^{-1}_1(u_1), \dots, F^{-1}_d(u_d)\big) \,\ud C(u_1, \dots, u_d) \,.
	\end{aligned}
\end{equation}
However, if $Q$ is a proper quasi-copula then $\ud Q\big(F_1(x_1), \dots, F(x_d)\big)$ does not induce a measure anymore, because the $Q$-volume $V_Q$ is not necessarily positive.
The idea is now to switch the function we integrate against, \textit{i.e.} to perform a Fubini transformation. 
In order to do so, the function $f$ has to induce a measure. 
Therefore, we consider functions of the following type.

\begin{definition}
	\begin{enumerate}[label=(\roman*)]
		\item A function $f:\mathbb{R}^d_+\rightarrow\mathbb{R}$ is called \emph{$\Delta$-antitonic} if for every subset $I=\{i_1, \dots, i_n\}\subseteq\{1, \dots, d\}$ with $|I|\ge1$ and every hypercube $\bigtimes_{j=1}^n(a_j, b_j]\subset\mathbb{R}^n_+$
			\begin{align*}
				(-1)^n\Delta_{a_1,b_1}^{i_1} \circ \dots\circ\Delta_{a_n,b_n}^{i_n} f\ge0 \,.
			\end{align*}
		\item A function $f:\mathbb{R}^d_+\rightarrow\mathbb{R}$ is called \emph{$\Delta$-monotonic} if for every subset $I=\{i_1, \dots, i_n\}\subseteq\{1, \dots, d\}$ with $|I|\ge1$ and every hypercube $\bigtimes_{j=1}^n(a_j, b_j]\subset\mathbb{R}^n_+$
			\begin{align*}
				\Delta_{a_1,b_1}^{i_1} \circ \dots\circ\Delta_{a_n,b_n}^{i_n} f\ge0 \,.
			\end{align*}
	\end{enumerate}
\end{definition}

We will frequently deal with marginals of functions $f$ and quasi-copulas $Q$, therefore the following definition is useful.
% We have already proved in Proposition \ref{prop:Q-I-margin} that marginals of (quasi)-copulas remain (quasi)-copulas.

\begin{definition}
	\begin{enumerate}[label=(\roman*)]
		\item Let $f:\mathbb{R}^d_+\rightarrow\mathbb{R}$. Then, for $I=\{i_1, \dots, i_n\}\subseteq\{1, \dots, d\}$, we define the \emph{$I$-margin} of $f$ as
		\begin{align*}
			f_I:\mathbb{R}^n_+\rightarrow\mathbb{R} \,,(x_{i_1}, \dots, x_{i_n})\mapsto f(x_1, \dots, x_d),\text{with }x_k=0 \text{ for } k\notin I\,.
		\end{align*}
		\item Let $Q\in\mathcal{Q}^d$. Then, for $I=\{i_1, \dots, i_n\}\subseteq\{1, \dots, d\}$, we define the \emph{$I$-margin} of $Q$ as
		\begin{align*}
			Q_I:[0,1]^n\rightarrow[0,1] \,,(u_{i_1}, \dots, u_{i_n})\mapsto Q(u_1, \dots, u_d),\text{with }u_k=1  \text{ for } k\notin I\,.
		\end{align*}
	\end{enumerate}
\end{definition}

According to well-known results, we can associate a measure to every left-continuous and $\Delta$-monotonic or $\Delta$-antitonic function $f:\mathbb{R}^d_+\rightarrow\mathbb{R}$ via
\begin{align}
	\mu_{f_I}(\emptyset) := 0 \ \ \text{ and } \ \	\mu_{f_I}\big( R \big) := V_{f_I}\big( R \big)\,,	
\end{align} 
for every hyperrectangle $R\subseteq \R^{|I|}$.
Then, we get that $\mu_{f_I}$ is a positive measure on $\mathbb{R}^{|I|}_+$ if $f$ is $\Delta$-monotonic, and that $(-1)^n\mu_{f_I}$ is a positive measure on $\mathbb{R}^{|I|}_+$ if $f$ is $\Delta$-antitonic. 
If $I=\{1, \dots, d\}$, then we write $\mu_f$ instead of $\mu_{f_I}$. 
In addition, we define $\mu_{f_\emptyset}:=\delta_0$, where $\delta$ denotes the Dirac measure.

\begin{remark}\thlabel{minusf}
Let $f:\mathbb{R}^d_+\rightarrow\mathbb{R}$	be a left-continuous function, such that $-f$ is either $\Delta$-antitonic or $\Delta$-monotonic.
Then, we have for $I=\{i_1, \dots, i_n\}\subseteq\{1, \dots, d\}$ with $|I|\ge1$ and every hypercube $\bigtimes_{j=1}^n(a_j, b_j]\subset\mathbb{R}^n_+$ that
\begin{align*}
	(-1)^n\Delta_{a_1,b_1}^{i_1} \circ \dots\circ\Delta_{a_n,b_n}^{i_n} f&\le0 \,, \quad\textit{if $\,-f$ is $\Delta$-antitonic, and} \\
	\Delta_{a_1,b_1}^{i_1} \circ \dots\circ\Delta_{a_n,b_n}^{i_n} f&\le0 \,, \quad\textit{if $\,-f$ is $\Delta$-monotonic.}
\end{align*}
Hence, $-\mu_{f_I}$ is a positive measure on $\mathbb{R}^{|I|}_+$ if $-f$ is $\Delta$-monotonic and $(-1)^{n+1}\mu_{f_I}$ is a positive measure on $\mathbb{R}^{|I|}_+$ if $-f$ is $\Delta$-antitonic.
\end{remark}

The following definitions show how the measure induced by the $I$-marginals of functions in conjunction with the $I$-marginals of copulas, can be used to define an integration operation. 
We define iteratively: 
\begin{equation}
	\begin{aligned}\label{operatorVarphi}
	\textrm{for }|I|=0 : \varphi_f^I(C):=&f(0, \dots, 0) \,,\\
	%\varphi_f^I(C):=-f(0, \dots, &0)+\int\limits_{\mathbb{R}_+} f_{\{i_1\}} (x_{i_1}) \,dF_{i_1}(x_{i_1})+\int\limits_{\mathbb{R}_+} f_{\{i_2\}} (x_{i_2}) \,dF_{i_2}(x_{i_2}) \\
	%&+\int\limits_{\mathbb{R}^2_+} \widehat{C}_I (F_{i_1}(x_{i_1}),F_{i_2}(x_{i_2})) \,d\mu_f(x_{i_1},x_{i_2})
	\textrm{for }|I|=1 : \varphi_f^I(C):=&\int\limits_{\mathbb{R}_+} f_{i_1} (x_{i_1}) \,\ud F_{i_1}(x_{i_1}) \,,\\ 
	\textrm{for }|I|=n\ge2 : \varphi_f^I(C):=&\int\limits_{\mathbb{R}^{|I|}_+} \widehat{C}_I \big(F_{i_1}(x_{i_1}), \dots,F_{i_n}(x_{i_n})\big) \,\ud\mu_{f_I}(x_{i_1}, \dots, x_{i_n}) \\
	&+ \sum_{J\subset I}(-1)^{n+1-|J|}\varphi_f^J(C) \,,
	\end{aligned}
\end{equation}
where $\widehat{C}_I$ denotes the survival function of the $I$-margin of $C$.
% (Note that the $I$-margin of the survival function, \textit{i.e.} $(\widehat C)_I$, is always equal to zero, for every copula $C$.)
\citet[Prop. 5.3]{lux2016} proved that the operator $\varphi_f^{\{1,\dots,d\}}(C)$ defined above coincides with the expectation operator $\pi_f(C)$ in \eqref{expectationOperator} in case $f:\mathbb{R}^d\rightarrow\mathbb{R}$ is left-continuous, $\Delta$-antitonic or $\Delta$-monotonic and $C\in\mathcal{C}^d$.
However, the operator $\varphi_f^{\{1,\dots,d\}}(C)$ does not depend on $C$ being a copula, and can be also defined for quasi-copulas.
This motivates the following definition, which generalizes the expectation operator to quasi-copulas.

\begin{definition}
Let $f: \mathbb{R}^d\rightarrow\mathbb{R}$ be left-continuous, $\Delta$-antitonic or $\Delta$-monotonic and $d\ge2$.
Then, the \textit{expectation operator} is defined as follows $\pi_f:\mathcal{Q}^d\rightarrow\mathbb{R}, \,Q\mapsto\pi_f(Q)$\,, with
	\begin{align*}
		\pi_f(Q):=\varphi_f^{\{1, \dots, d\}}(Q) \,.
	\end{align*}
\end{definition}

\begin{remark}\label{rem:dual-operators}
Let $Q\in\mathcal{Q}^d$ and consider its survival function $\widehat Q$.
We define the dual to the operations $\varphi_f^I$ and $\pi_f$ as follows:
\begin{align*}
\widehat\varphi_f^I \big(\widehat Q\big) := \varphi_f^I \big(Q\big) 
	\quad \text{ and } \quad
\widehat{\pi}_f \big(\widehat{Q}\big) := \pi_f \big(Q\big) \,,
\end{align*}
since both operations actually only depend on the knowledge of $\widehat Q$ and not of $Q$ itself.
\end{remark}

\begin{remark}
Using that $V_{f_{i}}\big((0,x]\big)=f_{i}(x)-f_{i}(0)$, we can rewrite the case $|I|=\{i\}$ from \eqref{operatorVarphi} as follows 
\begin{align}
	\int\limits_{\mathbb{R}_+} f_{i} (x_i) \,\ud F_i(x_i) = \int\limits_{\mathbb{R}_+} \big(1-F_i(x_i)\big) \ud\mu_{f_{i}}(x_i) +f_{i}(0)\,. \label{FubiniFlipI=1}
\end{align}
Depending on the way the integrals are computed, this representation might be more useful. 
If we compute the one-dimensional integrals as in \eqref{operatorVarphi} instead of \eqref{FubiniFlipI=1}, then we do not need $f_{\{i\}}$ to induce a measure. 
Therefore, in \cite{lux2016} the authors define $\Delta$-antitonic and $\Delta$-monotonic in the sense that only $f_I, |I|\ge2$, has to induce a measure.
\end{remark}

The following result provides an alternative, simpler representation for the expectation operator $\pi_f(Q)$.

\begin{theorem}\thlabel{lemmaPhi}
Let $f: \mathbb{R}^d\rightarrow\mathbb{R}$ be left-continuous, $\Delta$-antitonic or $\Delta$-monotonic and $Q\in\mathcal{Q}^d$. 
Then, the following representation holds
\begin{align}
\pi_f(Q)
	&= \int\limits_{\mathbb{R}^d_+}\widehat{Q}\big(F_1(x_1), \dots, F_d(x_d)\big) \,\ud\mu_f(x_1, \dots, x_d) \nonumber\\
	&\quad+ \sum_{\substack{J\subset I\\ |J|=d-1}}\,\,\,\int\limits_{\mathbb{R}^{d-1}_+}\widehat{Q}_J\big(F_1(x_{i_1}), \dots, F_{i_{d-1}}(x_{i_{d-1}})\big) \,\ud\mu_{f_J}(x_{i_1}, \dots, x_{i_{d-1}}) \nonumber\\
	&\quad+ \dots +\sum_{i=1}^d\,\,\int\limits_{\mathbb{R}_+} f_{\{i\}}(x_i) \,\ud F_i(x_i) - (d-1) f(0, \dots, 0) \nonumber\\
	&= f(0, \dots, 0)+\sum_{n=1}^d \,\sum_{\substack{J\subseteq I \\ J=\{i_1, \dots, i_n\}}} \,\int\limits_{\mathbb{R}^{n}_+}\widehat{Q}_J\big(F_1(x_{i_1}), \dots, F_{i_n}(x_{i_{n}})\big) \,\ud\mu_{f_J}(x_{i_1}, \dots, x_{i_{n}}) \label{sumPhi}\,.
\end{align}
\end{theorem}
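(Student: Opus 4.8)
The plan is to recognise the recursive definition \eqref{operatorVarphi} of the operator $\varphi_f^I$ as an inclusion--exclusion (M\"obius) relation on the Boolean lattice of subsets of $\{1,\dots,d\}$ and then to invert it. To this end I would first abbreviate, for $J=\{j_1,\dots,j_m\}\subseteq\{1,\dots,d\}$,
\[
	\Phi_\emptyset:=f(0,\dots,0),\qquad
	\Phi_J:=\int_{\mathbb{R}^{m}_+}\widehat{Q}_J\big(F_{j_1}(x_{j_1}),\dots,F_{j_m}(x_{j_m})\big)\,\ud\mu_{f_J}(x_{j_1},\dots,x_{j_m})\quad(m\ge1),
\]
which are well defined since the margin $Q_J$ of $Q$ inherits (C1)--(C4) and is thus again a quasi-copula, so $\widehat{Q}_J$ makes sense, while $\mu_{f_J}$ is, up to a sign depending on $|J|$, a positive measure because $f$ is left-continuous and $\Delta$-antitonic or $\Delta$-monotonic. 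With this notation the second representation \eqref{sumPhi} is exactly the assertion $\pi_f(Q)=\varphi_f^{\{1,\dots,d\}}(Q)=\sum_{J\subseteq\{1,\dots,d\}}\Phi_J$, so it suffices to prove that $\varphi_f^I(Q)=\sum_{J\subseteq I}\Phi_J$ for every $I\subseteq\{1,\dots,d\}$.

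The first step is to verify the \emph{inverted} identities $\Phi_I=\sum_{J\subseteq I}(-1)^{|I|-|J|}\varphi_f^J(Q)$, $I\subseteq\{1,\dots,d\}$, by distinguishing three cases. For $|I|=0$ this is the definition $\varphi_f^\emptyset(Q)=\Phi_\emptyset=f(0,\dots,0)$. For $|I|=1$, writing $I=\{i\}$, the right-hand side equals $\varphi_f^{\{i\}}(Q)-\varphi_f^\emptyset(Q)=\int_{\mathbb{R}_+}f_{\{i\}}(x_i)\,\ud F_i(x_i)-f_{\{i\}}(0)$, and since $Q_{\{i\}}(u)=u$ by (C2) we have $\widehat{Q}_{\{i\}}(u)=1-u$, so \eqref{FubiniFlipI=1} identifies this with $\Phi_{\{i\}}$. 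For $|I|=n\ge2$, the recursion \eqref{operatorVarphi} reads $\varphi_f^I(Q)=\Phi_I+\sum_{J\subset I}(-1)^{n+1-|J|}\varphi_f^J(Q)$; rearranging, this is equivalent to $\sum_{J\subseteq I}(-1)^{n-|J|}\varphi_f^J(Q)=\Phi_I$, which is the case $|I|=n$ of the inverted identity.

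The second step is M\"obius inversion on the Boolean lattice: applied to the family of identities just established, it yields $\varphi_f^I(Q)=\sum_{J\subseteq I}\Phi_J$ for all $I$, and the case $I=\{1,\dots,d\}$ gives \eqref{sumPhi}. If one prefers a self-contained argument, the same follows by induction on $|I|$: inserting $\varphi_f^J(Q)=\sum_{K\subseteq J}\Phi_K$ into \eqref{operatorVarphi} for $|I|=n\ge2$ and interchanging summations, the coefficient of $\Phi_K$ for a fixed proper subset $K\subset I$ becomes $\sum_{J:\,K\subseteq J\subset I}(-1)^{n+1-|J|}$; writing $q:=|I|-|K|\ge1$ and parametrising $J=K\cup L$ with $L\subset I\setminus K$, this equals $(-1)^{n+1-|K|}\sum_{\ell=0}^{q-1}(-1)^{\ell}\binom{q}{\ell}=(-1)^{n+1-|K|}(-1)^{q+1}=1$ by the binomial identity $\sum_{\ell=0}^{q}(-1)^{\ell}\binom{q}{\ell}=0$, whence $\varphi_f^I(Q)=\Phi_I+\sum_{K\subset I}\Phi_K=\sum_{K\subseteq I}\Phi_K$. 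Finally, the first displayed representation in the statement is obtained from \eqref{sumPhi} by grouping the summands according to $|J|$ and rewriting each of the $d$ terms with $|J|=1$ via \eqref{FubiniFlipI=1} as $\int_{\mathbb{R}_+}\widehat{Q}_{\{i\}}(F_i(x_i))\,\ud\mu_{f_{\{i\}}}(x_i)=\int_{\mathbb{R}_+}f_{\{i\}}(x_i)\,\ud F_i(x_i)-f(0,\dots,0)$; collecting the $d$ resulting constants $-f(0,\dots,0)$ together with the leading $+f(0,\dots,0)$ produces the term $-(d-1)f(0,\dots,0)$.

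I expect the genuine obstacle to be the sign-and-index bookkeeping in the second step: one must confirm that the coefficients $(-1)^{n+1-|J|}$ in \eqref{operatorVarphi} are precisely those of the M\"obius function of the Boolean lattice and that the diagonal term $J=I$ carries coefficient $+1$, so that all cross terms cancel and only $\sum_{J\subseteq I}\Phi_J$ survives; once the two base cases $|I|\in\{0,1\}$ are matched with the definition of $\varphi_f^\emptyset$ and with \eqref{FubiniFlipI=1}, respectively, the rest is the elementary binomial identity above.
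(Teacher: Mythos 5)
Your proof is correct and follows essentially the same route as the paper: induction on $|I|$, substitution of the induction hypothesis into the recursion \eqref{operatorVarphi}, and verification via the binomial identity $\sum_{\ell=0}^{q}(-1)^{\ell}\binom{q}{\ell}=0$ that each $\Phi_K$, $K\subset I$, ends up with coefficient $+1$. The M\"obius-inversion framing is only a repackaging of that same computation, and your handling of the base cases $|I|\in\{0,1\}$ via \eqref{FubiniFlipI=1} matches the paper's.
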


\begin{proof}
Without loss of generality we assume $I=\{1, \dots, d\}$. 
For $|I|=1$ the claim is given by \eqref{FubiniFlipI=1}. 
Now assume it holds all $n<d$ for some $d\in\mathbb{N}$. 
Define
\begin{align*}
	\alpha_{J}:=&\int\limits_{\mathbb{R}^{|J|}_+}\widehat{Q}_J\big(F_1(x_{i_1}), \dots, F_{i_n}(x_{i_{n}})\big) \,\ud\mu_{f_J}(x_{i_1}, \dots, x_{i_{n}}) \,, \,|J|\ge1 \,, \\
	\alpha_{\emptyset}:=& f(0, \dots, 0) \,.
\end{align*}
Then we deduce by \eqref{operatorVarphi} and the induction hypothesis
\begin{align}
	\varphi_f^I(Q) 	&= \alpha_I + \sum_{J\subset I}(-1)^{d+1-|J|}\varphi_f^J(Q) \nonumber \\
					&= \alpha_I+ \sum_{J\subset I}(-1)^{d+1-|J|} \sum_{J'\subseteq J} \alpha_{J'} \,. \label{proofAlpha}
\end{align}
Hence, we have to show that for every $J'\subset I$ the term $\alpha_{J'}$ appears exactly once in \eqref{proofAlpha} with positive sign. 
Consider $J'=\{j_1, \dots, j_k\}\subseteq J=\{i_1, \dots, i_n\}$. 
There are $\binom{d-k}{n-k}$ many $J\subset I$ with $J'\subseteq J$ because for $J\backslash J'$ we can choose $n-k$ elements out of $I\backslash J'$. 
We have
\begin{align*}
	\sum_{J\subset I}(-1)^{d+1-|J|} \sum_{J'\subseteq J} \alpha_{J'} &=\sum_{\substack{k=0 \\|J'|=k}}^n\sum_{n=k}^{d-1} (-1)^{d+1-n}\binom{d-k}{n-k} \alpha_{J'} \,.
\end{align*}
Further,
\begin{align*}
\sum_{n=k}^{d-1} (-1)^{d+1-n}\binom{d-k}{n-k}
   =\begin{cases} \sum\limits_{n=0}^{d-k-1}(-1)^{n+1} \binom{d-k}{n}, &\,\textrm{if $d-k$ is even,}\\ 
				   \sum\limits_{n=0}^{d-k-1}(-1)^n \binom{d-k}{n}, &\,\textrm{if $d-k$ is odd.}
	\end{cases}
\end{align*}
Since $\sum_{l=0}^m(-1)^l\binom{m}{l}=0$, $m\in\mathbb{N}$, we have $\sum_{n=k}^{d-1} (-1)^{d+1-n}\binom{d-k}{n-k}=1$ for both cases. 
This proves \eqref{sumPhi}. 
% The other representation of $\varphi_f^I(Q)$ follows by \eqref{FubiniFlipI=1}.
\end{proof}

Now we can show that the expectation operator $\pi_f$ is increasing or decreasing with respect to the lower and upper orthant order, depending on the properties of the function $f$.

\begin{proposition}\label{mono}
Let $Q_1, Q_2\in\mathcal{Q}^d$ and $f:\mathbb{R}_+^d\rightarrow\mathbb{R}$. 
Then
\begin{enumerate}[label=(\roman*)]
	\item for all $f$ $\Delta$-antitonic s.t. the integrals exist
			\begin{align*}
				Q_1\preceq_{LO} Q_2 \Longrightarrow \pi_f(Q_1)\le\pi_f(Q_2) \,,
			\end{align*}
	\item for all $f$ $\Delta$-monotonic s.t. the integrals exist
			\begin{align*}
				Q_1\preceq_{UO} Q_2 \Longrightarrow \pi_f(Q_1)\le\pi_f(Q_2) \,.
			\end{align*}
	\item for all $-f$ $\Delta$-antitonic s.t. the integrals exist
			\begin{align*}
				Q_1\preceq_{LO} Q_2 \Longrightarrow \pi_f(Q_1)\ge\pi_f(Q_2) \,,
			\end{align*}
	\item for all $-f$ $\Delta$-monotonic s.t. the integrals exist
			\begin{align*}
				Q_1\preceq_{UO} Q_2 \Longrightarrow \pi_f(Q_1)\ge\pi_f(Q_2) \,.
			\end{align*}
\end{enumerate}
\end{proposition}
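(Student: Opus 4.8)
I would run the whole argument through the representation \eqref{sumPhi} of $\pi_f(Q)$ furnished by \thref{lemmaPhi}, a ``dual'' version of it in which the roles of $Q$ and $\widehat Q$ are exchanged, and the elementary fact that both orthant orders are inherited by lower-dimensional margins. For the margin inheritance, I would first note that $Q_J(\mathbf u_J)=Q(\mathbf u_J,\mathbf 1)$ (ones in the coordinates outside $J$), so $Q_1\preceq_{LO}Q_2$ forces $(Q_1)_J\le(Q_2)_J$ pointwise; and that, by the boundary condition~(C1) and the inclusion--exclusion expansion of the $Q$-volume $V_Q$, setting any coordinate outside $J$ equal to~$0$ annihilates the corresponding terms, whence $\widehat{Q}_J(\mathbf u_J)=\widehat Q(\mathbf u_J,\mathbf 0)$ (zeros in the coordinates outside $J$) and therefore $Q_1\preceq_{UO}Q_2$ forces $\widehat{(Q_1)}_J\le\widehat{(Q_2)}_J$ pointwise. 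Finally, the sign of the measures involved is dictated by $f$: $\mu_{f_J}$ is a positive measure for every $J$ when $f$ is $\Delta$-monotonic, the analogous measures appearing in the dual representation are positive when $f$ is $\Delta$-antitonic, and for $-f$ in place of $f$ the statements of \thref{minusf} flip these signs.

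\textbf{Parts (ii) and (iv).} For these the relevant order is $\preceq_{UO}$, so I would work directly with \eqref{sumPhi}: writing it out for $Q_2$ and for $Q_1$ and subtracting, the constant $f(0,\dots,0)$ and all summands involving only one-dimensional margins drop out (for the latter, $\widehat{Q_{\{i\}}}(u)=Q_{\{i\}}(1)-Q_{\{i\}}(u)=1-u$ by~(C2), so they do not depend on $Q$), leaving $\sum_{|J|\ge2}\int_{\R^{|J|}_+}\big(\widehat{(Q_2)}_J-\widehat{(Q_1)}_J\big)\big(F_{i_1}(x_{i_1}),\dots\big)\,\ud\mu_{f_J}$. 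When $f$ is $\Delta$-monotonic and $Q_1\preceq_{UO}Q_2$, each integrand is $\ge0$ by margin inheritance and each $\mu_{f_J}$ is a positive measure, so $\pi_f(Q_2)\ge\pi_f(Q_1)$; this is~(ii). When $-f$ is $\Delta$-monotonic, $-\mu_{f_J}$ is positive by \thref{minusf}, the inequality reverses, and one gets~(iv).

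\textbf{Parts (i) and (iii).} Here the hypothesis is on $\preceq_{LO}$, which in dimension $d>2$ is genuinely different from $\preceq_{UO}$, so I could no longer compare the summands of \eqref{sumPhi} term by term; instead I would need $\pi_f(Q)$ expressed through the margins $Q_J$ of $Q$ itself. I would obtain such a representation from \eqref{sumPhi} by substituting the inclusion--exclusion identity $\widehat{Q_J}(\mathbf u_J)=\sum_{K\subseteq J}(-1)^{|K|}Q_K(\mathbf u_K)$ (with $Q_\emptyset:=1$) and re-collecting the terms by $K$, exactly in the combinatorial spirit of the proof of \thref{lemmaPhi}; it should take the form $\pi_f(Q)=\sum_{K}\int_{\R^{|K|}_+}Q_K\big(F_{i_1}(x_{i_1}),\dots\big)\,\ud\mu_{g_K}$, where $g_K$ is the upper tail of $f$ in the coordinates of $K$ with the coordinates outside $K$ sent to $+\infty$, and each $\mu_{g_K}$ is a positive measure precisely when $f$ is $\Delta$-antitonic (negative when $-f$ is, by \thref{minusf}). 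Granting this, parts (i) and (iii) would follow exactly as (ii) and (iv): subtracting, the terms with $|K|\le1$ are $Q$-independent and the terms with $|K|\ge2$ contribute $\int\big((Q_2)_K-(Q_1)_K\big)\,\ud\mu_{g_K}\ge0$ by margin inheritance of $\preceq_{LO}$, so $\pi_f(Q_1)\le\pi_f(Q_2)$, and the sign reverses when $-f$ is $\Delta$-antitonic.

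\textbf{Main obstacle.} I expect (ii) and (iv) to be routine once \thref{lemmaPhi} is in hand. The delicate part is the dual $Q$-representation needed for (i) and (iii): after the inclusion--exclusion re-collection the measures $\mu_{g_K}$ are, a priori, sums of marginals of the $\mu_{f_J}$ with varying signs, and the crux is to show that these collapse into a measure with a \emph{definite} sign. This is precisely where the ``value-at-$\infty$'' reading of $\Delta$-antitonicity enters, since its defining inequalities are imposed for every value of the complementary coordinates and hence persist in the limit; making this rigorous --- together with the minor technical point of finiteness of the marginalised measures, which is anyway subsumed in the proviso ``such that the integrals exist'' --- is the main work.
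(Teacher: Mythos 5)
Your argument takes a genuinely different route from the paper's: the paper disposes of (i) and (ii) by simply citing \citet[Theorem 5.5]{lux2016} and obtains (iii) and (iv) from these together with Remark \ref{minusf} (equivalently, via $\pi_{-f}=-\pi_f$), whereas you attempt a self-contained proof from the representation \eqref{sumPhi}. For parts (ii) and (iv) your argument is complete and correct: the identities $Q_J(\mathbf u_J)=Q(\mathbf u_J,\mathbf 1)$ and $\widehat{Q_J}(\mathbf u_J)=\widehat Q(\mathbf u_J,\mathbf 0)$ (the latter via (C1) and the inclusion--exclusion expansion of $V_Q$) do give inheritance of the two orthant orders by all margins, the $|J|\le 1$ terms of \eqref{sumPhi} are indeed $Q$-independent, and the signs of the $\mu_{f_J}$ are as you state. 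This is a nice self-contained derivation of something the paper only cites.

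For parts (i) and (iii), however, what you have written is a plan, not a proof, and the entire difficulty of the lower-orthant case sits in the step you defer. Two things must actually be established there. First, the sign of the recollected measures: after substituting $\widehat{Q_J}(\mathbf u_J)=\sum_{K\subseteq J}(-1)^{|K|}Q_K(\mathbf u_K)$, the measure $\mu_{g_K}$ is $(-1)^{|K|}$ times a sum over $J\supseteq K$ of $K$-marginals of the $\mu_{f_J}$, and these summands alternate in sign (for $d=3$ and $K=\{i,j\}$ one adds $\mu_{f_{\{i,j\}}}\ge0$ to the $\{i,j\}$-marginal of $\mu_f\le0$); the telescoping identity that collapses this sum to the limit, as $c\to\infty$, of the $K$-volume of the section of $f$ with the coordinates outside $K$ set to $c$ --- which is nonnegative because the defining inequalities of $\Delta$-antitonicity hold for \emph{every} value of the complementary coordinates, and which converges by a monotonicity argument --- is precisely the content of the lower-orthant half of the cited Theorem 5.5, and you assert it without proving it. Second, the re-collection itself is an interchange of summation and integration that needs absolute convergence; the proviso ``such that the integrals exist'' only guarantees existence of the integrals appearing in \eqref{operatorVarphi} and \eqref{sumPhi}, while the marginalised measures $\mu_{g_K}$ can have infinite mass and $Q_K\big(F_{i_1}(x_{i_1}),\dots\big)\to1$ away from the boundary, so individual terms $\int Q_K\,\ud\mu_{g_K}$ may diverge even when $\pi_f(Q)$ is finite. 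Either carry out these two steps or do what the paper does and invoke \citet[Theorem 5.5]{lux2016} for (i)--(ii); as it stands, half of the proposition is unproved.
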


\begin{proof}
The first two statements are \citet[Theorem 5.5]{lux2016}, while the next two are a direct consequence of them and Remark \ref{minusf}.
\end{proof}

%% ===================================================================================================================================================================================== %%
%% ===================================================================================================================================================================================== %%

\section{Copulas and arbitrage}
\label{sec:CA}

In this section, we apply the results on improved \FH bounds and on stochastic dominance for quasi-copulas to mathematical finance. 
We will first derive bounds for the arbitrage-free prices of certain classes of multi-asset derivatives. 
Then, we will formulate a necessary condition for the absence of arbitrage in markets where several multi-asset derivatives are traded simultaneously.

%% ====================================================================================================================================================================================== %%

\subsection{Model and assumptions}

Let $(\Omega, \mathcal{F},\mathbb{P})$ be a probability space, where $\Omega=\mathbb R^d$. 
We consider the following financial market model: 
There exists one time period with initial time $t=0$ and final time $t=T<\infty$. 
Let $d\ge2$. 
There exist $d+1$ non-redundant primary assets denoted by $B, S^1, \dots, S^d$. 
We assume that their initial prices are known, \textit{i.e.} $(B_0, S^1_0, \dots, S^d_0)\in\mathbb{R}^{d+1}_+$. 
$B$ denotes the risk-free asset that earns the interest rate $r\ge0$ and, for the sake of simplicity, we set $B_T=1$, while $S^1_T, \dots, S^d_T$ are $\mathbb{R}_+$-valued random variables on the given probability space. 

This framework contains discrete- and continuous-time models as well, however we are only interested in the distribution of the random vector $S^1_T, \dots, S^d_T$.
This information is sufficient for the valuation of European vanilla derivatives, while the path dynamics are not necessary.
In the examples presented in the following section, the random vector is modeled according to the \citet{Black_Scholes_1973} model and an exponential L\'evy model; \textit{cf.} \citet{Eberlein01a}. 
The marginal distributions can be obtained by calibrating a parametric model to European option prices maturing at $T$ or by extracting the implied distributions from option prices using \textit{e.g.} the \citet{breeden} formula.

A probability measure $\mathbb{Q}$ on $(\Omega,\mathcal{F})$, equivalent to $\mathbb{P}$, that satisfies
\begin{align*}
	S^i_0 = B_0 \, \mathbb{E}_{\mathbb{Q}}\big[ S^i_T \big], \quad i=1, \dots, d \,\,,
\end{align*}
is called an equivalent martingale measure (EMM) for our financial market. 
% Let $\mathscr{P}$ denote the set of all EMMs for our financial market model, \textit{i.e.} $\mathscr{P}=\{\mathbb Q \,|\, \mathbb Q\sim\mathbb P, \mathbb Q \text{ MM}\}$.
The existence of an equivalent martingale measure and the absence of arbitrage have a well-known implication for the pricing of derivatives on $S^i_T$. 
Consider a European derivative of $S^i_T$ with payoff $H(S^i_T)$ at time $T$, where $H$ is a function such that $\mathbb{E}_{\mathbb{Q}}[H(S^i_T)]$ exists. 
Then, the arbitrage-free price is provided by
\begin{align*}
	H_0^i = B_0 \, \mathbb{E}_{\mathbb{Q}}\big[ H(S^i_T) \big], \quad i=1,\dots,d.
\end{align*}

We assume that the risk-neutral marginal distributions of each $S_T^i$ are known and unique for all $i=1,\dots,d$, \textit{i.e.} the univariate marginal distribution of $S_T^i$ under $\mathbb{Q}$ is equal for all EMMs $\mathbb{Q}$.
We further assume that these distributions are continuous, and denote them by $F_i$, \textit{i.e.}
% Hence, $\mathbb{Q}\in\mathscr{P}$ if 
\begin{align}\label{marginalsMartingaleMeasure}
	\mathbb{Q}\big( S^1_T\in\mathbb{R}_+, \dots, S^{i-1}_T\in\mathbb{R}_+, S^i_T\le x, S^{i+1}_T\in\mathbb{R}_+, \dots, S^d_T\in\mathbb{R}_+ \big) = F_i(x) \,, 
\end{align}
for all $i=1, \dots, d$. 
Let $\mathscr{P}$ denote the set of all EMMs for our financial market model, 
\textit{i.e.} $\mathscr{P}=\{\mathbb Q \,|\, \mathbb Q\sim\mathbb P, \mathbb Q \text{ MM}, F_i \ \text{continuous and satisfies \eqref{marginalsMartingaleMeasure}}\}$.
The assumption that the marginal distributions are known is not unrealistic, because their dynamics can be derived from market data; see \textit{e.g.} \citet{breeden}. 
This property implies, by the second Fundamental Theorem of Asset Pricing, that the prices of single-asset options are unique, and is referred to in the literature as \textit{static-completeness} of a financial market, see \textit{e.g.} \citet{Carr_Madan_2001}.
Let us stress that this does not imply $|\mathscr{P}|=1$, because the dependence structure of $S^1, \dots, S^d$ might not be uniquely determined. 

The financial market, beside options on the single assets $S^1, \dots, S^d$, consists also of a finite number of European multi-asset derivatives, denoted by $Z^1,\dots,Z^q$, for $q\in\mathbb{N}$. 
Their final payoffs at time $T$ are given by
\begin{align*}
	Z^i_T = z_i\big( S^1_T, \dots, S^d_T \big) \,,\quad i=1, \dots, q,
\end{align*}
where the payoff functions $z_i:\mathbb{R}^d_+\rightarrow\mathbb{R}_+$ (resp. their negation, \textit{i.e.} $-z_i$) are either $\Delta$-antitonic or $\Delta$-monotonic. 
We assume that $Z^1, \dots, Z^q$ are ``truly'' multi-asset derivatives, \textit{i.e.} they are written on at least two and up to $d$ of the risky assets.

\begin{definition}[Arbitrage-free price vector]
Let $(Z^1, \dots, Z^q)$ be a set of multi-asset derivatives as described above, for $q\in\mathbb{N}$. 
We call $\mathbf{p}=(p_1, \dots, p_q)\in\mathbb{R}_+^q$ an \textit{arbitrage-free price vector} for $(Z^1, \dots, Z^q)$ if there exists a measure $\mathbb{Q}\in\mathscr{P}$ such that
\begin{align*}
	p_k = B_0 \, \mathbb{E}_{\mathbb{Q}} \big[ Z^k_T \big] \,,\quad\textit{for all }k=1, \dots, q.
\end{align*}
We denote the set of all arbitrage-free price vectors for $(Z^1, \dots, Z^q)$ by $\Pi(Z^1, \dots, Z^q)$. 
This set is described by
\begin{align*}
\Pi(Z^1, \dots, Z^q) 
	= \Big\{ \Big( B_0 \, \mathbb{E}_{\mathbb{Q}}\big[ Z^1_T \big], \dots, B_0 \, \mathbb{E}_{\mathbb{Q}}\big[ Z^q_T \big] \Big) \,\Big|\, \mathbb{Q}\in\mathscr{P}\textit{ and } \mathbb{E}_{\mathbb{Q}}\big[Z^k_T\big]<\infty\,, k=1, \dots, q \Big\} \,.
\end{align*}
\end{definition}

%% ====================================================================================================================================================================================== %%

\subsection{Copulas and arbitrage-free price vectors}

In this sub-section, we study the relation between copulas and the set of arbitrage-free price vectors $\Pi(Z^1, \dots, Z^q)$.
The first result appears already in \citet[Corollary 3]{Tav15}. 

\begin{proposition}\label{prop:c-p-bijection}
In the multi-asset financial market model described above, there is a bijection between $\mathscr{P}$ and $\mathcal{C}^d$.
\end{proposition}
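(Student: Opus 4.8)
The plan is to make the bijection explicit, with Sklar's theorem doing essentially all of the work. Since everything in the model depends on $\mathbb{Q}$ only through the law of $(S^1_T,\dots,S^d_T)$, I identify a measure in $\mathscr{P}$ with the joint distribution it induces on $\mathbb{R}^d=\Omega$; equivalently, I take $S^i_T$ to be the $i$-th coordinate map and $\mathcal{F}=\mathcal{B}(\mathbb{R}^d)$. I would then define $\Phi\colon\mathscr{P}\to\mathcal{C}^d$ by sending $\mathbb{Q}$ to the copula $C_{\mathbb{Q}}$ of the joint law of $(S^1_T,\dots,S^d_T)$ under $\mathbb{Q}$. This is well defined: by \eqref{marginalsMartingaleMeasure} the marginals of that joint law are $F_1,\dots,F_d$, which are continuous, so by Sklar's theorem $C_{\mathbb{Q}}$ exists and, the marginals being continuous, is unique.

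Injectivity of $\Phi$ is immediate: if $\Phi(\mathbb{Q}_1)=\Phi(\mathbb{Q}_2)=C$, then by \eqref{eq:sklar} both laws have distribution function $\mathbf{x}\mapsto C\big(F_1(x_1),\dots,F_d(x_d)\big)$, hence they coincide, so $\mathbb{Q}_1=\mathbb{Q}_2$. For surjectivity, fix $C\in\mathcal{C}^d$ and let $\mathbb{Q}_C$ be the probability measure on $\big(\mathbb{R}^d,\mathcal{B}(\mathbb{R}^d)\big)$ whose distribution function is $\mathbf{x}\mapsto C\big(F_1(x_1),\dots,F_d(x_d)\big)$; this is a genuine $d$-dimensional distribution function by \eqref{eq:sklar}. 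It remains to check $\mathbb{Q}_C\in\mathscr{P}$. By property (C2) the $i$-th marginal of $\mathbb{Q}_C$ is $F_i$, which is continuous, so \eqref{marginalsMartingaleMeasure} holds. The martingale property is then automatic: $\mathbb{E}_{\mathbb{Q}_C}[S^i_T]=\int_{\mathbb{R}_+}x\,\ud F_i(x)$ depends on $\mathbb{Q}_C$ only through its $i$-th marginal, and picking any $\mathbb{Q}_0\in\mathscr{P}$ (non-empty by the standing assumptions) gives $\int_{\mathbb{R}_+}x\,\ud F_i(x)=\mathbb{E}_{\mathbb{Q}_0}[S^i_T]=S^i_0/B_0$, hence $\mathbb{E}_{\mathbb{Q}_C}[S^i_T]=S^i_0/B_0$ for every $i$. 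Granting also $\mathbb{Q}_C\sim\mathbb{P}$, one obtains $\mathbb{Q}_C\in\mathscr{P}$ and, by uniqueness in Sklar's theorem, $\Phi(\mathbb{Q}_C)=C$; so $\Phi$ is a bijection.

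The one step worth treating with care is the equivalence $\mathbb{Q}_C\sim\mathbb{P}$ in the surjectivity argument, since two measures on $\mathbb{R}^d$ with the same continuous marginals need not be mutually equivalent; this is the only place where the modelling conventions, rather than Sklar's theorem, enter, and in the present one-period setting it is harmless because only the law of $(S^1_T,\dots,S^d_T)$ is used in the sequel. Beyond that, the argument is just bookkeeping around \eqref{eq:sklar} and the defining properties of a copula, so I do not anticipate a genuine obstacle: Proposition~\ref{prop:c-p-bijection} is Sklar's theorem recast in the language of equivalent martingale measures.
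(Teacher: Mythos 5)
Your proposal is correct and follows exactly the route the paper takes: the paper's entire proof is the one-line observation that the statement is a reformulation of Sklar's theorem, and your argument merely spells out the well-definedness, injectivity and surjectivity that this entails. The equivalence $\mathbb{Q}_C\sim\mathbb{P}$ that you rightly flag as the only delicate point is likewise left implicit by the authors, so you are not missing anything their proof supplies.
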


\begin{proof}
This is a reformulation of Sklar's Theorem, \textit{cf.} \citet{sklar}, using the language of mathematical finance.
\end{proof}

This bijection allows us to express the arbitrage-free price of a derivative $Z_T^i$, and therefore also expectations of the form $\mathbb{E}_{\mathbb{Q}}[Z_T^i]$ for $\mathbb Q\in \mathscr P$, in terms of the associated copula $C_{\mathbb{Q}}$.
That is,
\begin{align}\label{integraldC}
% \begin{aligned}
\mathbb{E}_{\mathbb{Q}} \big[Z_T^i\big]
	&= \mathbb{E}_{\mathbb{Q}} \big[z_i(S_T^1, \dots, S_T^d)\big] \nonumber \\
	&= \int\limits_{\mathbb{R}_+^d}z_i(x_1, \dots, x_d) \,\ud F_{\mathbb{Q}}(\mathbf{x}) \nonumber \\
	&= \int\limits_{[0,1]^d}z_i\big(F^{-1}_1(u_1), \dots, F^{-1}_d(u_d)\big) \,\ud C_{\mathbb{Q}}(\mathbf{u}) \,.
% \end{aligned}
\end{align}
We denote the expectation under the measure associated with a copula $C$ by $\mathbb{E}_C$. 
The bijection between the set of equivalent martingale measures and the set of copulas in Proposition \ref{prop:c-p-bijection} allows now to describe the set of arbitrage-free price vectors in terms of copulas, \textit{i.e.}
\begin{align}\label{eq:AFPV-copula}
\Pi(Z^1, \dots, Z^q) 
	= \Big\{ \Big( B_0 \, \mathbb{E}_C\big[Z_T^1\big], \dots, B_0 \, \mathbb{E}_C\big[Z_T^1\big] \Big) \,\Big|\, C\in\mathcal{C}^d\textit{ and } \mathbb{E}_{C}\big[Z^k_T\big]<\infty\,, k=1, \dots, q \Big\} \,.
\end{align}
Finally, recall the definition of the expectation operator $\pi_f$ from the previous section.
Using \eqref{expectationOperator} and \eqref{integraldC} we get that $\pi_{z_k}(C)=\mathbb{E}_C[Z_T^k]$ for $k=1, \dots,q$. 
Hence, for the multi-asset derivatives $Z^1, \dots, Z^q$ we define the following pricing rule between the set of copulas and the set of arbitrage-free price vectors $\Pi(Z^1, \dots, Z^q)$,
\begin{align*}
	\varrho:\mathcal{C}^d\rightarrow\mathbb{R}^q_+ \,, C\mapsto \varrho(C):=\big(B_0\, \pi_{z_1}(C), \dots, B_0\,\pi_{z_q}(C)\big) \,.
\end{align*}
Consequently, we can prove the following equivalence result.

\begin{proposition}\label{priceInPi}
Let $\mathbf{p}\in\mathbb{R}_+^q$. 
Then
\begin{align*}
	\mathbf{p}\in\Pi(Z^1, \dots, Z^q) \, \Longleftrightarrow \, \exists \,C\in\mathcal{C}^d \textit{ such that } \varrho(C)=\mathbf{p}\,.
\end{align*}
\end{proposition}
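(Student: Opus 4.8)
The plan is to prove this as a direct bookkeeping consequence of the bijection in Proposition \ref{prop:c-p-bijection} together with the identity $\pi_{z_k}(C)=\mathbb{E}_C[Z_T^k]$ recorded just before the definition of $\varrho$ (which rests on \eqref{expectationOperator}, \eqref{integraldC} and \citet[Prop.~5.3]{lux2016}). Both implications then follow by unwinding the definitions of $\Pi(Z^1,\dots,Z^q)$ and of $\varrho$, and by translating freely between an EMM $\mathbb{Q}\in\mathscr{P}$ and its associated copula $C_{\mathbb{Q}}\in\mathcal{C}^d$.

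For the forward implication, assume $\mathbf{p}\in\Pi(Z^1,\dots,Z^q)$. By the description of this set there is an EMM $\mathbb{Q}\in\mathscr{P}$ with $\mathbb{E}_{\mathbb{Q}}[Z_T^k]<\infty$ and $p_k=B_0\,\mathbb{E}_{\mathbb{Q}}[Z_T^k]$ for $k=1,\dots,q$. Setting $C:=C_{\mathbb{Q}}\in\mathcal{C}^d$, formula \eqref{integraldC} gives $\mathbb{E}_{\mathbb{Q}}[Z_T^k]=\mathbb{E}_C[Z_T^k]=\pi_{z_k}(C)$, hence $\varrho(C)=\big(B_0\,\pi_{z_1}(C),\dots,B_0\,\pi_{z_q}(C)\big)=\mathbf{p}$. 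For the reverse implication, assume $C\in\mathcal{C}^d$ satisfies $\varrho(C)=\mathbf{p}$. Since $\mathbf{p}\in\mathbb{R}_+^q$, each coordinate $B_0\,\pi_{z_k}(C)=p_k$ is finite, so $\mathbb{E}_C[Z_T^k]=\pi_{z_k}(C)<\infty$; let $\mathbb{Q}\in\mathscr{P}$ be the EMM corresponding to $C$ under the bijection of Proposition \ref{prop:c-p-bijection}. Again by \eqref{integraldC}, $B_0\,\mathbb{E}_{\mathbb{Q}}[Z_T^k]=B_0\,\mathbb{E}_C[Z_T^k]=p_k$ for every $k$, and the expectations are finite, so $\mathbf{p}\in\Pi(Z^1,\dots,Z^q)$.

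There is no genuine obstacle here; the only point deserving a word of care is the passage $\mathbb{E}_{\mathbb{Q}}[Z_T^k]=\pi_{z_k}(C_{\mathbb{Q}})$, which is exactly \eqref{integraldC} combined with the coincidence of $\pi_{z_k}(C)$ with the expectation operator on copulas. That coincidence uses the standing assumption that each payoff $z_k$, or its negative $-z_k$, is $\Delta$-antitonic or $\Delta$-monotonic, so that \citet[Prop.~5.3]{lux2016} applies; one should mention this so the reader sees why the hypotheses on the $z_k$ are being invoked. The finiteness requirement embedded in the definition of $\Pi$ is handled in the reverse direction precisely by the fact that $\varrho(C)=\mathbf{p}$ lies in $\mathbb{R}_+^q$.
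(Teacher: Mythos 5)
Your proof is correct and follows the same route as the paper, which simply notes that the equivalence is immediate from the definition of the pricing rule $\varrho$ together with \eqref{expectationOperator}, \eqref{integraldC} and \eqref{eq:AFPV-copula}; you have just written out the bookkeeping explicitly. The extra care you take about finiteness and the applicability of the expectation operator is consistent with the paper's standing assumptions and does not change the argument.
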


\begin{proof}
The equivalence follows immediately from the definition of the pricing rule together with \eqref{expectationOperator}, \eqref{integraldC} and  \eqref{eq:AFPV-copula}.
\end{proof}

\begin{remark}\thlabel{priceInPiSurvival}
Using the definition of the dual operator $\widehat \pi_f$, see Remark \ref{rem:dual-operators}, the previous result carries over analogously to the set of survival copulas $\widehat{\mathcal{C}}^d$, \textit{i.e.}
\begin{align*}
	\widehat{\varrho}:\widehat{\mathcal{C}}^d\rightarrow\mathbb{R}^q_+ \,, 
	\widehat{C}\mapsto \widehat{\varrho}(\widehat{C}):=\big(B_0\, \widehat{\pi}_{z_1}(\widehat{C}), \dots, B_0\,\widehat{\pi}_{z_q}(\widehat{C})\big)
\end{align*}
and
\begin{align*}
	\mathbf{p}\in\Pi(Z^1, \dots, Z^q) \Longleftrightarrow \exists \,\widehat{C}\in\widehat{\mathcal{C}}^d \textit{ such that } \widehat{\varrho}(\widehat{C})=\mathbf{p}\,.
\end{align*}
\end{remark}

%% ====================================================================================================================================================================================== %%

\subsection{Bounds for the arbitrage-free price of a single multi-asset derivative}

We have assumed that the payoff functions $z_i:\R^d_+\to\R_+$, resp. their negations $-z_i$, are either $\Delta$-antitonic or $\Delta$-monotonic.
Therefore, we get from Proposition \ref{mono} that $\pi_{z_i}$ is non-decreasing, resp. non-increasing, with respect to the lower or upper orthant order. 
Hence, we can use the Fr\'echet--Hoeffding bounds and the parametrization of arbitrage-free price vectors in terms of copulas in order to derive arbitrage-free bounds for the set $\Pi(Z^i)$ for each multi-asset derivative in the market. 
Moreover, assume there exists additional information about the copulas, \textit{i.e.} consider a constrained set $\mathcal{C}^*\subseteq\mathcal{C}^d$ such as $\mathcal{C}^{\mathcal{S},C^*}$ or $\mathcal{C}^{\rho,\theta}$. 
Then, we also have a constrained set of arbitrage-free prices, \textit{i.e.}
\begin{align*}
	\Pi^*(Z^i) = \big\{B_0 \pi_{z_i}(C) \,|\, C\in\mathcal{C}^*\big\} \subseteq \Pi(Z^i) \,.
\end{align*}
In other words, the improved Fr\'echet--Hoeffding bounds allow us to tighten the range of arbitrage-free prices for the derivative $Z^i$. 
This concept works analogously for the set of survival functions, \textit{i.e.} for $\widehat{\mathcal{C}}^*\subset\widehat{\mathcal{C}}^d$.

\begin{corollary}
Let $Z$ be a multi-asset derivative in the financial market described above with payoff function $z$.
\begin{enumerate}[label=(\roman*)]
\item Let $z$ be $\Delta$-antitonic and $Q_L^*, Q_U^*$ be the lower and upper bound for some constrained set $\mathcal{C}^*\subseteq\mathcal{C}^d$.
	  Then, for all $C\in\mathcal{C}^*$ holds
		\begin{align*}
			\pi_z(W_d) \le \pi_z(Q_L^*) \le \pi_z(C) \le \pi_z(Q_U^*) \le \pi_z(M_d) \,.
		\end{align*}
\item Let $z$ be $\Delta$-monotonic and $\widehat{Q}_L^*, \widehat{Q}_U^*$ be the lower and upper bound for some constrained set $\widehat{\mathcal{C}}^*\subseteq\widehat{\mathcal{C}}^d$. 
	  Then, for all $\widehat{C}\in\widehat{\mathcal{C}}^*$ holds
		\begin{align*}
			\widehat{\pi}_z(\overline{W}_d) \le \widehat{\pi}_z(\widehat{Q}_L^*) \le \widehat{\pi}_z(\widehat{C}) \le \widehat{\pi}_z(\widehat{Q}_U^*) \le \widehat{\pi}_z(\overline{M}_d)=\pi_z(M_d)\,,
		\end{align*}
		where $\overline{W}_d(\mathbf{u})=W_d(\mathbf 1-\mathbf u)$ and $\overline{M}_d(\mathbf{u})=M_d(\mathbf 1-\mathbf u)$.
\end{enumerate}
\end{corollary}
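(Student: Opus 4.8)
The plan is to reduce both statements to Proposition \ref{mono} (monotonicity of $\pi_z$, resp. $\widehat\pi_z$, with respect to the appropriate orthant order) combined with the \FH bounds and the improved \FH bounds for the constrained set $\mathcal{C}^*$. The chain of inequalities is built up by sandwiching $C$ (resp. $\widehat C$) between the improved bounds, and then sandwiching the improved bounds between the global \FH bounds.

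For (i): fix $C\in\mathcal{C}^*$. By the definition of the improved \FH bounds recalled in Section \ref{sec:main-properties}, we have $Q_L^*\preceq_{LO} C\preceq_{LO} Q_U^*$, and from the ordinary \FH bounds $W_d\preceq_{LO} Q_L^*$ and $Q_U^*\preceq_{LO} M_d$ (these hold because $W_d\preceq_{LO} Q\preceq_{LO} M_d$ for every $Q\in\mathcal{Q}^d$, and the improved bounds lie in $\mathcal{Q}^d$). Chaining these, $W_d\preceq_{LO} Q_L^*\preceq_{LO} C\preceq_{LO} Q_U^*\preceq_{LO} M_d$. Since $z$ is $\Delta$-antitonic, Proposition \ref{mono}(i) says $\pi_z$ is non-decreasing with respect to $\preceq_{LO}$, and applying it along each link of the chain yields
\begin{align*}
	\pi_z(W_d) \le \pi_z(Q_L^*) \le \pi_z(C) \le \pi_z(Q_U^*) \le \pi_z(M_d)\,,
\end{align*}
which is exactly the claim. (One should note the integrals exist under the standing assumption that $\mathbb{E}_{C}[Z_T]<\infty$; the bounds $W_d,M_d$ give finite values since they dominate/are dominated in the integrand sense, or one simply assumes integrability throughout as in the rest of the section.)

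For (ii): the argument is the dual one, carried out on survival functions with $\preceq_{UO}$ in place of $\preceq_{LO}$. Fix $\widehat C\in\widehat{\mathcal{C}}^*$. The upper-orthant analogues of the improved \FH bounds (stated in \cite[Appendix A]{lux2016}, and recalled in the excerpt for survival copulas) give $\widehat Q_L^*\preceq_{UO}\widehat C\preceq_{UO}\widehat Q_U^*$, while the upper-orthant \FH bounds give $\overline W_d\preceq_{UO}\widehat Q_L^*$ and $\widehat Q_U^*\preceq_{UO}\overline M_d$, using $W_d(\mathbf 1-\mathbf u)\le\widehat Q(\mathbf u)\le M_d(\mathbf 1-\mathbf u)$ for all $\widehat Q$ arising from $\mathcal Q^d$. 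Since $z$ is $\Delta$-monotonic, Proposition \ref{mono}(ii) tells us $\pi_z$ is non-decreasing with respect to $\preceq_{UO}$; recalling $\widehat\pi_z(\widehat Q)=\pi_z(Q)$ by definition of the dual operator (Remark \ref{rem:dual-operators}), we apply this along the chain $\overline W_d\preceq_{UO}\widehat Q_L^*\preceq_{UO}\widehat C\preceq_{UO}\widehat Q_U^*\preceq_{UO}\overline M_d$ to obtain
\begin{align*}
	\widehat\pi_z(\overline W_d)\le\widehat\pi_z(\widehat Q_L^*)\le\widehat\pi_z(\widehat C)\le\widehat\pi_z(\widehat Q_U^*)\le\widehat\pi_z(\overline M_d)\,.
\end{align*}
The final equality $\widehat\pi_z(\overline M_d)=\pi_z(M_d)$ follows from $\overline M_d=M_d(\mathbf 1-\cdot)=\widehat M_d$ (noted in the excerpt right after the \FH bounds) together with $\widehat\pi_z(\widehat M_d)=\pi_z(M_d)$ from Remark \ref{rem:dual-operators}.

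The only genuinely delicate point — the "main obstacle" — is bookkeeping around which orthant order and which version of the operator ($\pi_z$ vs. $\widehat\pi_z$) is appropriate given the antitonic/monotonic dichotomy of $z$; once one fixes that $\Delta$-antitonic pairs with $\preceq_{LO}$ and $\Delta$-monotonic with $\preceq_{UO}$ (via Proposition \ref{mono}), everything is a mechanical concatenation of inequalities. A secondary point worth a line is that the extreme terms $\pi_z(W_d)$ and $\pi_z(M_d)$ coincide in both parts (since $\overline M_d=\widehat M_d$), so the upper endpoint of the range is the same whichever representation one uses; this is what makes the two parts consistent with each other and with the unconstrained bounds $\Pi(Z^i)\supseteq\Pi^*(Z^i)$.
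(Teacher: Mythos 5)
Your proof is correct and follows exactly the route the paper takes: the paper's own (one-line) proof invokes the ordering of the \FH and improved \FH bounds, the monotonicity of $\pi_z$ from Proposition \ref{mono}, and the survival-function analogues, which is precisely the chain of inequalities you spell out. No issues.
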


\begin{proof}
These claims follow directly from the ordering of the bounds, the monotonicity results in Proposition \ref{mono}, and their analogues for survival functions.
\end{proof}

\begin{remark}
The inequalities above change direction if $-z$ is either $\Delta$-antitonic or $\Delta$-monotonic.	
\end{remark}

\begin{remark}
\citet[Section 6]{lux2016} provide conditions such that the improved option price bounds are sharp, in the sense that $\inf \Pi^*(Z) =\pi_z(Q^*_L)$ and $\sup \Pi^*(Z)=\pi_z(Q^*_U)$ respectively.
Depending on the payoff function $z$ the computation of the improved option price bounds can be quite complicated. 
\citet{RR01}, \citet{tankov} and \citet{lux2016} present several derivatives for which the integrals can be enormously simplified. 
\end{remark}

%% ====================================================================================================================================================================================== %%

\subsection{A necessary condition for the absence of arbitrage in the presence of several multi-asset derivatives}

In this subsection, we assume there exist several multi-asset derivatives $Z^1, \dots, Z^q$ in the financial market, and consider a price vector $\mathbf{p}=(p_1, \dots, p_q)\in\mathbb{R}_+^q$ for them. 
Our goal is to check whether $\mathbf{p}$ is an arbitrage-free price vector or not, \textit{i.e.} whether $\mathbf{p}\in\Pi(Z^1, \dots, Z^q)$. 
In fact, we will derive a necessary condition for $\mathbf{p}$ to be an arbitrage-free price vector. 

Consider the following constrained sets of copulas
\begin{align*}
	\mathcal{C}^{\pi_k,p_k} := \big\{ C\in\mathcal{C}^d\,|\,B_0\,\pi_{z_k}(C)=p_k \big\} \,,\,k=1, \dots, q\,,
\end{align*}
which are sets of the form \eqref{eq:sets-q-rho-theta}.
This set contains all copulas that are compatible with the observed market price $p_k$ for the multi-asset derivative $Z^k$, for each $k=1,\dots,q$.
Clearly, $\mathcal{C}^{\pi_k,p_k}\neq\emptyset$ if and only if $p_k\in\Pi(Z^k)$ by Proposition \ref{priceInPi}. 
% Hence, $\mathcal{C}^{\pi_k,p_k}$ contains all copulas compatible with the price $p_k$ for the derivative $Z^k$, for each $k=1,\dots,q$. 
Analogously we define the set of survival functions
\begin{align*}
	\widehat{\mathcal{C}}^{\pi_k,p_k} := \big\{ \widehat{C}\in\widehat{\mathcal{C}}^d\,|\,B_0\,\widehat{\pi}_{z_k}(\widehat{C})=p_k \big\} \,, \,k=1, \dots, q\,.
\end{align*}

The next result shows that $\mathbf p$ is an arbitrage-free price vector for $(Z^1, \dots, Z^q)$ if and only if there exists a copula $C$ that reproduces the market prices of the derivatives $(Z^1, \dots, Z^q)$.
This copula will then necessarily belong to one of the sets ${\mathcal{C}}^{\pi_k,p_k}$ or $\widehat{\mathcal{C}}^{\pi_k,p_k}$, for $k=1,\dots,q$.

\begin{proposition}\label{prop:arb-cop-equiv}
Let $\mathbf{p}\in\mathbb{R}_+^q$. 
We have the following equivalences:  if $z_1,\dots,z_q$ are $\Delta$-antitonic, then
\begin{align*}
\mathbf{p}\in\Pi(Z^1, \dots, Z^q) 	&\Longleftrightarrow \bigcap_{k=1}^q \mathcal{C}^{\pi_k,p_k} \neq\emptyset
\intertext{while, if $z_1,\dots,z_q$ are $\Delta$-monotonic, then}
\mathbf{p}\in\Pi(Z^1, \dots, Z^q) 	&\Longleftrightarrow \bigcap_{k=1}^q \widehat{\mathcal{C}}^{\pi_k,p_k} \neq\emptyset \,. 
\end{align*}
\end{proposition}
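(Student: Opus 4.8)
The plan is to obtain both equivalences by a direct unwinding of definitions, invoking Proposition \ref{priceInPi} in the $\Delta$-antitonic case and its survival-copula counterpart from Remark \ref{priceInPiSurvival} in the $\Delta$-monotonic case; the substantive work has already been carried out in the Sklar-type bijection of Proposition \ref{prop:c-p-bijection}, in the identity $\pi_{z_k}(C)=\mathbb{E}_C[Z_T^k]$ (valid by \citet[Prop. 5.3]{lux2016} since each $z_k$ is left-continuous and $\Delta$-antitonic or $\Delta$-monotonic), and in Proposition \ref{priceInPi} itself.

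First I would treat the $\Delta$-antitonic case. By Proposition \ref{priceInPi}, $\mathbf{p}\in\Pi(Z^1,\dots,Z^q)$ if and only if there exists $C\in\mathcal{C}^d$ with $\varrho(C)=\mathbf{p}$; unwinding the definition of $\varrho$, this says exactly that $B_0\,\pi_{z_k}(C)=p_k$ with $\mathbb{E}_C[Z_T^k]=\pi_{z_k}(C)<\infty$ for every $k=1,\dots,q$, which is precisely the statement $C\in\mathcal{C}^{\pi_k,p_k}$ for all $k$, i.e. $C\in\bigcap_{k=1}^q\mathcal{C}^{\pi_k,p_k}$. Hence the existence of such a $C$ is equivalent to $\bigcap_{k=1}^q\mathcal{C}^{\pi_k,p_k}\neq\emptyset$. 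Here one should check that the integrability requirements match up on both sides: if $C$ lies in the intersection then $\pi_{z_k}(C)=p_k/B_0<\infty$, so $\mathbb{E}_C[Z_T^k]<\infty$ and $\varrho(C)=\mathbf{p}$; conversely, any $C$ realizing $\mathbf{p}$ as an arbitrage-free price vector has $\mathbb{E}_C[Z_T^k]<\infty$, so $\pi_{z_k}(C)=p_k/B_0$ is finite and $C$ lies in the intersection.

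For the $\Delta$-monotonic case I would argue in the same way, replacing Proposition \ref{priceInPi} by Remark \ref{priceInPiSurvival} and $\varrho,\pi_{z_k},\mathcal{C}^d$ by their duals $\widehat{\varrho},\widehat{\pi}_{z_k},\widehat{\mathcal{C}}^d$: the condition $\widehat{\varrho}(\widehat{C})=\mathbf{p}$ unwinds to $B_0\,\widehat{\pi}_{z_k}(\widehat{C})=p_k$ for all $k$, i.e. $\widehat{C}\in\bigcap_{k=1}^q\widehat{\mathcal{C}}^{\pi_k,p_k}$. (Equivalently, since $\widehat{\pi}_{z_k}(\widehat{C})=\pi_{z_k}(C)$ by the definition of the dual operator and $C\mapsto\widehat{C}$ is a bijection between $\mathcal{C}^d$ and $\widehat{\mathcal{C}}^d$, the set $\bigcap_k\widehat{\mathcal{C}}^{\pi_k,p_k}$ is just the image of $\bigcap_k\mathcal{C}^{\pi_k,p_k}$ under this bijection, so non-emptiness transfers; the survival-copula formulation is the one chosen here because it is the form in which the improved \FH bounds for $\Delta$-monotonic payoffs are most naturally applied in the sequel.) I therefore do not anticipate a genuine obstacle: the only delicate points are the bookkeeping of the finiteness conditions and checking that passing to survival copulas loses nothing, which it does not since $C\mapsto\widehat{C}$ is bijective and $\widehat{\pi}_f$ depends only on $\widehat{Q}$.
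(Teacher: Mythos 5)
Your argument is correct and follows essentially the same route as the paper: both rest on Proposition \ref{priceInPi} (and Remark \ref{priceInPiSurvival} for the $\Delta$-monotonic case) to translate $\mathbf{p}\in\Pi(Z^1,\dots,Z^q)$ into the existence of a single copula $C$ with $\varrho(C)=\mathbf{p}$, and then observe that this single $C$ lies in every $\mathcal{C}^{\pi_k,p_k}$ simultaneously, so the equivalence with non-emptiness of the intersection is immediate. Your explicit bookkeeping of the finiteness conditions is a small refinement the paper leaves implicit, but it does not change the substance of the argument.
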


\begin{proof}
Let $\mathbf{p}\in\Pi(Z^1, \dots, Z^q)$ and $z_1,\dots,z_q$ be $\Delta$-antitonic.
Then, there exists a $d$-copula $C\in\mathcal C^d$ such that $\varrho(C)=\mathbf p$ hence, for every $k=1,\dots,q$, there exists a $C\in\mathcal{C}^{\pi_k,p_k}$ such that $\varrho^k(C)=p_k$.
This readily implies that 
$$\bigcap_{k=1}^q \mathcal{C}^{\pi_k,p_k} \neq\emptyset.$$
Using the same arguments in the opposite direction allows to prove the equivalence.
The case for $\Delta$-monotonic functions is completely analogous.
\end{proof}

\begin{remark}
The previous result implies that the set of arbitrage-free price vectors for $Z^1, \dots, Z^q$ is a subset of the Cartesian product of the sets of arbitrage-free price vectors for each $Z^i$, \textit{i.e.}
\[
	\Pi(Z^1, \dots, Z^q) \subseteq \Pi(Z^1) \times \cdots \times \Pi(Z^q).
\]
In other words, we can have derivatives that are priced within their own no-arbitrage bounds, however when they are considered together an arbitrage opportunity may arise.
An example in this direction will be presented in the following section.
\end{remark}

The idea now is to find pointwise upper and lower bounds for the sets of copulas $\mathcal{C}^{\pi_k,p_k}$ and $\widehat{\mathcal{C}}^{\pi_k,p_k}, k=1, \dots,q$, and here the improved \FH bounds play a crucial role.
Let us define
\begin{align}
\overline Q^k_{\mathbf{p}}(\mathbf{u})
	&=\begin{cases} 
		Q_U^{\pi_k,p_k}(\mathbf{u}), \quad \textrm{if $z_k$ is $\Delta$-antitonic},\\ 
		\widehat{Q}_U^{\pi_k,p_k}(\mathbf{u}), \quad \textrm{if $z_k$ is $\Delta$-monotonic},
	\end{cases} \label{eq:iFH-upp}  \\ \label{eq:iFH-dow}
\underline Q^k_{\mathbf{p}}(\mathbf{u})
	&=\begin{cases} 
		Q_L^{\pi_k,p_k}(\mathbf{u}), \quad \textrm{if $z_k$ is $\Delta$-antitonic},\\  
		\widehat{Q}_L^{\pi_k,p_k}(\mathbf{u}), \quad\textrm{if $z_k$ is $\Delta$-monotonic},
	\end{cases}
\end{align}
where $Q_L^{\pi_k,p_k}, Q_U^{\pi_k,p_k}, \widehat{Q}_L^{\pi_k,p_k}, \widehat{Q}_U^{\pi_k,p_k}$ are defined as in \eqref{eq:iFH-func-L}--\eqref{eq:iFH-func-U}; see also \citet[Prop. A.2]{lux2016}. 
Moreover, we define
\begin{equation}\label{ApBp}
	\overline Q_{\mathbf{p}}(\mathbf{u}) := \min\big\{ \overline Q^k_{\mathbf p}(\mathbf{u})\,|\,k=1, \dots, q \big\}
		\quad \text{and} \quad
	\underline Q_{\mathbf{p}}(\mathbf{u}) := \max\big\{ \underline Q^k_{\mathbf p}(\mathbf{u})\,|\,k=1, \dots, q \big\}.
\end{equation}

Let us recall that $\mathcal{C}^{\pi_k,p_k}$ is the set of copulas that are compatible with the observed market prices of the multi-asset derivatives $Z^1,\dots,Z^q$.
Proposition \ref{prop:arb-cop-equiv} states that the market is free of arbitrage if and only if there exists a copula in the intersection of the sets $\mathcal{C}^{\pi_k,p_k}$ that is compatible with the observed market prices for these derivatives. 
The bounds $\overline Q^k_{\mathbf{p}}$ and $\underline Q^k_{\mathbf{p}}$ are the pointwise upper and lower bounds for the set $\mathcal{C}^{\pi_k,p_k}$, for each $k=1, \dots, q$, and dictate the maximal and minimal value that a copula compatible with the observed market prices can take.
Moreover, $\overline Q_{\mathbf{p}}$ and $\underline Q_{\mathbf{p}}$ are the minimal upper bound and the maximal lower bound over all $k=1,\dots,q$. 
The next Theorem is the main result of this section, and provides a necessary condition for the absence of arbitrage in a financial market in the presence of several multi-asset derivatives; this generalizes \citet[Proposition 9]{Tav15} to the $d$-dimensional case.
Indeed, if the market is free of arbitrage, then a copula will exist in the intersection of the sets $\mathcal{C}^{\pi_k,p_k}$ and thus $\overline Q_{\mathbf{p}} \ge \underline Q_{\mathbf{p}}$ pointwise.

\begin{theorem}\label{thm:arbitrage}
Let $\mathbf{p}\in\mathbb{R}_+^q$ and $z_1,\dots,z_q$ be either $\Delta$-antitonic or $\Delta$-monotonic. 
Then, in the financial market described above, with several multi-asset derivatives $Z^1, \dots, Z^q$ traded simultaneously, we have
\begin{align}\label{eq:arbitrage-implied}
	\mathbf{p}\in\Pi(Z^1, \dots, Z^q) \Longrightarrow \underline Q_{\mathbf{p}}(\mathbf{u})\le \overline Q_{\mathbf{p}}(\mathbf{u}) \quad\textit{for all }\mathbf{u}\in[0,1]^d\,.
\end{align}
\end{theorem}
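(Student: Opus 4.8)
The plan is to prove the contrapositive-style chain by combining Proposition \ref{prop:arb-cop-equiv} with the ordering of the improved \FH bounds. Assume $\mathbf{p}\in\Pi(Z^1,\dots,Z^q)$. I would split into the two cases according to whether the payoff functions are $\Delta$-antitonic or $\Delta$-monotonic, and treat them via the copula/survival-copula duality so that only one genuine argument is needed.

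First, suppose $z_1,\dots,z_q$ are $\Delta$-antitonic. By Proposition \ref{prop:arb-cop-equiv} there exists a copula $C\in\bigcap_{k=1}^q\mathcal{C}^{\pi_k,p_k}$. Fix such a $C$. For each $k=1,\dots,q$, since $C\in\mathcal{C}^{\pi_k,p_k}\subseteq\mathcal{Q}^{\pi_k,p_k}$, the improved \FH bounds of \citet[Sec. 3]{lux2016} — recalled in Section \ref{sec:main-properties}, with $\rho=B_0\pi_{z_k}$ and $\theta=p_k$ — apply and give
\begin{align*}
	Q_L^{\pi_k,p_k}(\mathbf{u}) \le C(\mathbf{u}) \le Q_U^{\pi_k,p_k}(\mathbf{u}) \qquad \text{for all }\mathbf{u}\in[0,1]^d,
\end{align*}
that is, $\underline Q^k_{\mathbf{p}}(\mathbf{u})\le C(\mathbf{u})\le \overline Q^k_{\mathbf{p}}(\mathbf{u})$ by the definitions \eqref{eq:iFH-upp}--\eqref{eq:iFH-dow}. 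Here I should check the hypotheses for the functional version of the improved bounds: $B_0\pi_{z_k}$ must be non-decreasing with respect to $\preceq_{LO}$ (this is Proposition \ref{mono}(i), using that $z_k$ is $\Delta$-antitonic and $B_0>0$) and continuous with respect to pointwise convergence of quasi-copulas (this is part of the standing framework inherited from \citet{lux2016}, and holds by dominated convergence under the integrability assumptions), and $p_k$ must lie in $[\rho(W_d),\rho(M_d)]$, which holds because $\mathcal{C}^{\pi_k,p_k}\neq\emptyset$. Taking the maximum over $k$ on the left and the minimum over $k$ on the right in the displayed inequality, and using \eqref{ApBp}, yields
\begin{align*}
	\underline Q_{\mathbf{p}}(\mathbf{u}) = \max_{1\le k\le q}\underline Q^k_{\mathbf{p}}(\mathbf{u}) \le C(\mathbf{u}) \le \min_{1\le k\le q}\overline Q^k_{\mathbf{p}}(\mathbf{u}) = \overline Q_{\mathbf{p}}(\mathbf{u})
\end{align*}
for every $\mathbf{u}\in[0,1]^d$, which is \eqref{eq:arbitrage-implied}.

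For the case where $z_1,\dots,z_q$ are $\Delta$-monotonic, I would run the identical argument on the survival side: Proposition \ref{prop:arb-cop-equiv} gives a survival copula $\widehat{C}\in\bigcap_{k=1}^q\widehat{\mathcal{C}}^{\pi_k,p_k}$, Remark \ref{rem:dual-operators} identifies $\widehat{\pi}_{z_k}$ with $\pi_{z_k}$, Proposition \ref{mono}(ii) supplies monotonicity of $B_0\widehat{\pi}_{z_k}$ with respect to $\preceq_{UO}$, and the survival-copula versions of the improved \FH bounds from \citet[Appendix A]{lux2016} give $\widehat{Q}_L^{\pi_k,p_k}\le\widehat{C}\le\widehat{Q}_U^{\pi_k,p_k}$ pointwise; taking max/min over $k$ as before closes this case. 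The only real subtlety — and the step I expect to need the most care — is the verification that the functional improved \FH bounds are genuinely applicable, i.e.\ that $B_0\pi_{z_k}$ (resp.\ $B_0\widehat{\pi}_{z_k}$) satisfies the monotonicity and continuity hypotheses of the construction in Section \ref{sec:main-properties}; everything else is a one-line consequence of the pointwise ordering of the bounds together with the definitions \eqref{eq:iFH-upp}--\eqref{ApBp}. Once that is in place, the theorem follows immediately.
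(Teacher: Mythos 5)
Your proof is correct and follows essentially the same route as the paper's: both rest on Proposition \ref{prop:arb-cop-equiv} combined with the pointwise validity of the improved \FH bounds for each constrained set $\mathcal{C}^{\pi_k,p_k}$ (resp.\ $\widehat{\mathcal{C}}^{\pi_k,p_k}$), followed by taking the maximum and minimum over $k$. The only difference is presentational: you argue directly by sandwiching a common copula $C\in\bigcap_{k}\mathcal{C}^{\pi_k,p_k}$ between $\underline Q^k_{\mathbf{p}}$ and $\overline Q^k_{\mathbf{p}}$ for every $k$, whereas the paper argues by contradiction through the emptiness of a pairwise intersection, invoking along the way the identification of the bounds with the exact infimum and supremum over the constrained sets --- a sharpness property your direct argument does not need.
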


\begin{proof}
Let $z_1,\dots,z_q$ be $\Delta$-antitonic. 
Assume there exists a $\mathbf{u}^*\in[0,1]^d$ such that $\underline Q_{\mathbf{p}}(\mathbf{u}^*)> \overline Q_{\mathbf{p}}(\mathbf{u}^*)$. 
By construction of $\overline Q_{\mathbf{p}}$ and $\underline Q_{\mathbf{p}}$, the minimum and maximum are always attained. 
Denote by $k_A, k_B\in\{1, \dots, q\}$ the indices for which the minimum and maximum are attained in \eqref{ApBp}. 
Then we have that $k_A\neq k_B$, because otherwise
\begin{align*}
\inf\big\{ C(\mathbf{u^*})\,|\,C\in\mathcal{C}^{\pi_{k_A},p_{k_A}} \big\} 
	=\underline Q_{\mathbf{p}}^{k_A}(\mathbf{u^*}) 
	&= \underline Q_{\mathbf{p}}(\mathbf{u}^*) \\
	&> \overline Q_{\mathbf{p}}(\mathbf{u}^*)
	=\overline Q_{\mathbf{p}}^{k_A}(\mathbf{u^*})
	=\sup\big\{ C(\mathbf{u^*})\,|\,C\in\mathcal{C}^{\pi_{k_A},p_{k_A}} \big\} \,.
\end{align*}
Hence, we get that
\begin{align*}
\underline Q_{\mathbf{p}}(\mathbf{u}^*) 
	= \inf\big\{ C(\mathbf{u^*})\,|\,C\in\mathcal{C}^{\pi_{k_B},p_{k_B}} \big\} 
	> \sup\{C(\mathbf{u^*})\,|\,C\in\mathcal{C}^{\pi_{k_A},p_{k_A}}\} 
	= \overline Q_{\mathbf{p}}(\mathbf{u}^*)\,,
\end{align*}
which readily implies that $\mathcal{C}^{\pi_{k_A},p_{k_A}}\cap\mathcal{C}^{\pi_{k_B},p_{k_B}}=\emptyset$. 
Therefore, we also get that
\begin{align*}
\bigcap_{k=1}^q \mathcal{C}^{\pi_k,p_k}\subseteq\Big( \mathcal{C}^{\pi_{k_A},p_{k_A}}\cap\mathcal{C}^{\pi_{k_B},p_{k_B}} \Big)=\emptyset \,,
\end{align*}
which is equivalent to $\mathbf{p}\notin\Pi(Z^1, \dots, Z^q)$ by Proposition \ref{prop:arb-cop-equiv}. 
The proof for $\Delta$-monotonic functions $z_1,\dots,z_q$ and $\widehat{\mathcal{C}}^{\pi_k,p_k}$ works completely analogously.
\end{proof}

We have assumed so far that there exist $S^1,\dots,S^d$ underlying assets in the financial market and that all multi-asset derivatives $Z^1,\dots,Z^q$ depend on all $d$ assets.
This is however not very realistic, as there might well exist derivatives that depend on some, but not all, of the underlying assets.
The next result treats exactly that scenario, making use of the results on $I$-margins of copulas.

Assume there exist $Z^1,\dots,Z^q$ multi-asset derivatives in the financial market, and that each derivative $Z^k$ depends on $d_k$ of the underlying assets with $2\le d_k\le d$.
That is, each $Z^k$ depends on $(S^{i_1}, \dots, S^{i_{d_k}})$ with  $I^k=\{i_1, \dots, i_{d_k}\}\subseteq\{1, \dots, d\}$ and $k=1,\dots,q$.
Let us define $I^*:=\bigcap_{k=1}^q I^k$ and $d^*:=|I^*|$. 
Moreover, we assume that $d^*\ge2$, \textit{i.e.} all multi-asset derivatives share at least two common underlying assets.

Let us now update the definition of the constrained set of copulas $\mathcal{C}^{\pi_k,p_k}$ as follows:
\begin{align*}
	\mathcal{C}^{\pi_k,p_k} := \big\{ C\in\mathcal{C}^d\,|\,B_0\,\pi_{z_k}(C_{I^k})=p_k \big\} \,,\,k=1, \dots, q\,;
\end{align*}
this coincides with the previous definition in case all derivatives depend on all $d$ assets.
Moreover, let us also define the following constrained set of copulas, that projects everything on the space of the common underlying assets:
\begin{align*}
	\mathcal{C}^{\pi_k,p_k}_{I^*} := \big\{ C_{I^*}\in\mathcal{C}^{d^*} \,|\, C \in \mathcal{C}^{\pi_k,p_k} \big\} \,,\,k=1, \dots, q\,.
\end{align*}
We define now the upper and lower improved \FH bounds for the set $\mathcal{C}^{\pi_k,p_k}_{I^*}$, denoted by $\overline Q^{k,*}_{\mathbf{p}}$ and $\underline Q^{,*}_{\mathbf{p}}$ completely analogously to \eqref{eq:iFH-upp} and \eqref{eq:iFH-dow}, and also define
\begin{equation}\label{ApBp-star}
	\overline Q_{\mathbf{p}}^*(\mathbf{u}) := \min\big\{ \overline Q^{,*}_{\mathbf p}(\mathbf{u})\,|\,k=1, \dots, q \big\}
		\quad \text{and} \quad
	\underline Q_{\mathbf{p}}^*(\mathbf{u}) := \max\big\{ \underline Q^{k,*}_{\mathbf p}(\mathbf{u})\,|\,k=1, \dots, q \big\},
\end{equation}
as in \eqref{ApBp}.
Then, we have the following necessary condition for the absence of arbitrage in this financial market.

\begin{theorem}\label{thm:arbitrage-2}
Let $\mathbf{p}\in\mathbb{R}_+^q$ and $z_1,\dots,z_q$ be either $\Delta$-antitonic or $\Delta$-monotonic. 
Then, in the financial market described above, with several multi-asset derivatives $Z^1, \dots, Z^q$ traded simultaneously, we have
\begin{align}\label{eq:arbitrage-implied-2}
	\mathbf{p}\in\Pi(Z^1, \dots, Z^q) \Longrightarrow \underline Q_{\mathbf{p}}^*(\mathbf{u})\le \overline Q_{\mathbf{p}}^*(\mathbf{u}) \quad\textit{for all }\mathbf{u}\in[0,1]^{d^*}\,.
\end{align}
\end{theorem}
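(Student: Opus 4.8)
The plan is to reduce Theorem \ref{thm:arbitrage-2} to Theorem \ref{thm:arbitrage} by working on the space $[0,1]^{d^*}$ of the common underlying assets. First I would observe that the key structural fact is the following: if $\mathbf{p}\in\Pi(Z^1,\dots,Z^q)$ and $z_1,\dots,z_q$ are $\Delta$-antitonic, then there exists a $d$-copula $C\in\bigcap_{k=1}^q\mathcal{C}^{\pi_k,p_k}$ (by the updated version of Proposition \ref{prop:arb-cop-equiv}, which carries over verbatim once one checks that $\pi_{z_k}(C_{I^k})$ depends only on the $I^k$-margin of $C$ and is still continuous and monotone with respect to the lower orthant order on $\mathcal{C}^{d_k}$, hence on $\mathcal{C}^d$). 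The $I^*$-margin $C_{I^*}$ of this copula is then a $d^*$-copula that lies simultaneously in every set $\mathcal{C}^{\pi_k,p_k}_{I^*}$, since $C_{I^*}$ is itself a margin of $C_{I^k}$ for each $k$ (because $I^*\subseteq I^k$), so the price constraint $B_0\,\pi_{z_k}(C_{I^k})=p_k$ is inherited. Therefore $\bigcap_{k=1}^q\mathcal{C}^{\pi_k,p_k}_{I^*}\neq\emptyset$.

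Next I would run the contrapositive argument exactly as in the proof of Theorem \ref{thm:arbitrage}, but now on $[0,1]^{d^*}$. Suppose there is a point $\mathbf{u}^*\in[0,1]^{d^*}$ with $\underline Q_{\mathbf{p}}^*(\mathbf{u}^*)>\overline Q_{\mathbf{p}}^*(\mathbf{u}^*)$. Since the minimum and maximum in \eqref{ApBp-star} are attained, pick indices $k_A,k_B$ realizing them; as before they must be distinct, for otherwise one would get $\inf\{C(\mathbf{u}^*)\,|\,C\in\mathcal{C}^{\pi_{k_A},p_{k_A}}_{I^*}\}>\sup\{C(\mathbf{u}^*)\,|\,C\in\mathcal{C}^{\pi_{k_A},p_{k_A}}_{I^*}\}$, a contradiction (this uses that $\overline Q^{k,*}_{\mathbf{p}}$ and $\underline Q^{k,*}_{\mathbf{p}}$ are, respectively, the pointwise supremum and infimum of the set $\mathcal{C}^{\pi_k,p_k}_{I^*}$, which is where the improved \FH bounds of \citet[Prop. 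A.2]{lux2016} enter). With $k_A\neq k_B$ one concludes $\mathcal{C}^{\pi_{k_A},p_{k_A}}_{I^*}\cap\mathcal{C}^{\pi_{k_B},p_{k_B}}_{I^*}=\emptyset$, hence $\bigcap_{k=1}^q\mathcal{C}^{\pi_k,p_k}_{I^*}=\emptyset$, contradicting the conclusion of the previous paragraph. The $\Delta$-monotonic case is handled identically, replacing $\mathcal{C}^{\pi_k,p_k}_{I^*}$ by the corresponding projected set of survival functions $\widehat{\mathcal{C}}^{\pi_k,p_k}_{I^*}$ and using the dual operators from Remark \ref{rem:dual-operators}.

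The main obstacle I anticipate is verifying that $\mathcal{C}^{\pi_k,p_k}_{I^*}$ really has $\overline Q^{k,*}_{\mathbf{p}}$ and $\underline Q^{k,*}_{\mathbf{p}}$ as its pointwise sup and inf, and in particular that it is nonempty exactly when $p_k\in\Pi(Z^k)$. This requires two things: first, that the functional $C_{I^*}\mapsto B_0\,\pi_{z_k}(C_{I^k})$ is well defined on $\mathcal{C}^{d^*}$ — i.e. that every $d^*$-copula in the image extends to a $d_k$-copula with the prescribed price, which follows because $\mathcal{C}^{d^*}\to\mathcal{C}^{d_k}$ admits sections (e.g. gluing via the independence copula on the extra coordinates) combined with a continuity/intermediate-value argument in the spirit of \eqref{eq:sets-q-rho-theta}; and second, that the improved \FH bounds machinery of Section \ref{sec:main-properties} applies to the induced functional $\rho_k(\cdot):=B_0\,\pi_{z_k}(\,\cdot_{I^k})$ on $\mathcal{Q}^{d^*}$, which it does since this functional is non-decreasing in the lower (resp. upper) orthant order by Proposition \ref{mono} and continuous under pointwise convergence. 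Once these bookkeeping points about margins and extensions are in place, the geometric core of the argument is a verbatim copy of the proof of Theorem \ref{thm:arbitrage}.

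\begin{proof}
We only treat the $\Delta$-antitonic case; the $\Delta$-monotonic case is obtained by replacing every set $\mathcal{C}^{\pi_k,p_k}_{I^*}$ with the corresponding projected set of survival functions and using the dual operators of Remark \ref{rem:dual-operators}, exactly as in the proof of Theorem \ref{thm:arbitrage}.

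Assume $\mathbf{p}\in\Pi(Z^1,\dots,Z^q)$. By Proposition \ref{priceInPi}, applied with the updated definition of $\mathcal{C}^{\pi_k,p_k}$ (note that $\pi_{z_k}(C_{I^k})$ depends only on the $I^k$-margin of $C$, is non-decreasing with respect to the lower orthant order by Proposition \ref{mono}, and is continuous under pointwise convergence), there exists a $d$-copula $C\in\mathcal{C}^d$ with $\varrho(C)=\mathbf p$, so $C\in\bigcap_{k=1}^q\mathcal{C}^{\pi_k,p_k}$. Since $I^*\subseteq I^k$ for every $k$, the $I^*$-margin $C_{I^*}$ is also the $I^*$-margin of $C_{I^k}$, hence $B_0\,\pi_{z_k}((C_{I^k}))=p_k$ gives $C_{I^*}\in\mathcal{C}^{\pi_k,p_k}_{I^*}$ for each $k$. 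Therefore
\begin{align*}
\bigcap_{k=1}^q \mathcal{C}^{\pi_k,p_k}_{I^*}\neq\emptyset \,.
\end{align*}

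Now suppose, for contradiction, that there exists $\mathbf{u}^*\in[0,1]^{d^*}$ with $\underline Q_{\mathbf{p}}^*(\mathbf{u}^*)>\overline Q_{\mathbf{p}}^*(\mathbf{u}^*)$. By construction the minimum and maximum in \eqref{ApBp-star} are attained; denote by $k_A,k_B\in\{1,\dots,q\}$ indices realizing them. Since $\overline Q^{k,*}_{\mathbf p}$ and $\underline Q^{k,*}_{\mathbf p}$ are, respectively, the pointwise supremum and infimum over $\mathcal{C}^{\pi_k,p_k}_{I^*}$ (by the improved \FH bounds; see \citet[Prop. A.2]{lux2016}), we must have $k_A\neq k_B$, since otherwise
\begin{align*}
\inf\big\{ C(\mathbf{u}^*)\,|\,C\in\mathcal{C}^{\pi_{k_A},p_{k_A}}_{I^*} \big\}
	=\underline Q_{\mathbf{p}}^{k_A,*}(\mathbf{u}^*)
	&= \underline Q_{\mathbf{p}}^*(\mathbf{u}^*) \\
	&> \overline Q_{\mathbf{p}}^*(\mathbf{u}^*)
	=\overline Q_{\mathbf{p}}^{k_A,*}(\mathbf{u}^*)
	=\sup\big\{ C(\mathbf{u}^*)\,|\,C\in\mathcal{C}^{\pi_{k_A},p_{k_A}}_{I^*} \big\} \,,
\end{align*}
which is impossible. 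Hence $k_A\neq k_B$ and
\begin{align*}
\underline Q_{\mathbf{p}}^*(\mathbf{u}^*)
	= \inf\big\{ C(\mathbf{u}^*)\,|\,C\in\mathcal{C}^{\pi_{k_B},p_{k_B}}_{I^*} \big\}
	> \sup\big\{ C(\mathbf{u}^*)\,|\,C\in\mathcal{C}^{\pi_{k_A},p_{k_A}}_{I^*} \big\}
	= \overline Q_{\mathbf{p}}^*(\mathbf{u}^*) \,,
\end{align*}
which implies $\mathcal{C}^{\pi_{k_A},p_{k_A}}_{I^*}\cap\mathcal{C}^{\pi_{k_B},p_{k_B}}_{I^*}=\emptyset$. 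Consequently
\begin{align*}
\bigcap_{k=1}^q \mathcal{C}^{\pi_k,p_k}_{I^*}\subseteq\Big( \mathcal{C}^{\pi_{k_A},p_{k_A}}_{I^*}\cap\mathcal{C}^{\pi_{k_B},p_{k_B}}_{I^*} \Big)=\emptyset \,,
\end{align*}
contradicting the first part of the proof. Therefore $\underline Q_{\mathbf{p}}^*(\mathbf{u})\le\overline Q_{\mathbf{p}}^*(\mathbf{u})$ for all $\mathbf{u}\in[0,1]^{d^*}$.
\end{proof}
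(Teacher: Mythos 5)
Your proof is correct and takes essentially the same route as the paper: the authors state that one need only observe that $\mathbf{p}\in\Pi(Z^1,\dots,Z^q)$ forces the existence of a $d^*$-copula in $\bigcap_{k=1}^q\mathcal{C}^{\pi_k,p_k}_{I^*}$ (obtained as the $I^*$-margin of a copula reproducing all the prices) and then repeat the contrapositive argument of Theorem \ref{thm:arbitrage} verbatim on $[0,1]^{d^*}$, which is exactly what you have written out in detail.
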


\begin{proof}
The idea is again that for $\mathbf{p}\in\Pi(Z^1, \dots, Z^q)$ there must exist a $d^*$-copula $C$ with $C\in\bigcap_{k=1}^q \mathcal{C}^{\pi_k,p_k}_{I^*}$. 
The proof is then completely analogous to the proof of Theorem \ref{thm:arbitrage}, and thus omitted for the sake of brevity.
\end{proof}

The intuition behind the last two results is that whenever the inequalities in \eqref{eq:arbitrage-implied} and \eqref{eq:arbitrage-implied-2} are violated for some $\mathbf u \in[0,1]^d$, then there does not exist a copula that can describe the prices of all derivatives $Z^1,\dots,Z^q$.
Hence, this set of prices is not jointly arbitrage-free.
Therefore, following \citet{Tav15}, we can also express the arbitrage detection problem as a minimization problem. 
Indeed, let us consider,
\begin{align*}
\mathscr{O}: \quad \min_{\mathbf{u}\in[0,1]^d} \big\{ \overline Q_{\mathbf{p}}(\mathbf{u})-\underline Q_{\mathbf{p}}(\mathbf{u}) \big\} \,.
\end{align*}
The objective function $\mathbf{u}\mapsto \overline Q_{\mathbf{p}}(\mathbf{u})-\underline Q_{\mathbf{p}}(\mathbf{u})$ takes values in $[-1,1]$ and the minimization is realized over a compact set. 
Hence, there exists a (possibly not unique) minimum, say $\mathbf{u}^*\in[0,1]^d$. 
The idea now is that if $\overline Q_{\mathbf{p}}(\mathbf{u^*})-\underline Q_{\mathbf{p}}(\mathbf{u^*})<0$, then $\mathbf{p}$ is not free of arbitrage. 
Note that the opposite result would not necessarily imply $\mathbf{p}$ being arbitrage-free, since Theorems \ref{thm:arbitrage} and \ref{thm:arbitrage-2} provide only a necessary condition. 
Nevertheless $\mathscr{O}$ might detect an arbitrage which is not obvious in the first place. 
In fact, it is possible that $\mathbf{p}=(p_1, \dots, p_q)$ is not free of arbitrage although all $p_i$'s lie within the arbitrage-free bounds computed from the Fr\'echet-Hoeffding bounds.
In summary, we have the following result:
\[
\min_{u\in\mathbb I^d} \big\{ \overline Q_{\mathbf p} (\mathbf u) - \underline Q_{\mathbf p} (\mathbf u) \big\}
  = \begin{cases}
  \ge 0, & \text{no decision}, \\
  < 0, & \pi\notin\Pi.
  \end{cases}
\]

%% ====================================================================================================================================================================================== %%
%% ====================================================================================================================================================================================== %%

\section{Applications}
\label{sec:appl}

In this section, we present some applications of the previous results in the computation of bounds for arbitrage-free prices and in the detection of arbitrage opportunities.
We are particularly interested in the case where the prices of each multi-asset derivative lie within their respective no-arbitrage bounds, yet an arbitrage arises when they are considered jointly. 
In the numerical experiments we will use both artificial data as well as real market data.

The framework for the applications and the numerical examples presented below is summarized in the following bullet points:
\begin{itemize}%[$\bullet$]
\item We consider a financial market as described above with final time $T=1$.
\item We assume, for simplicity, that the interest rate is zero, \textit{i.e.} $B_t=1\,, t\in[0,1]$\,.
\item There exist three risky assets $S^1, S^2, S^3$ ($d=3$) with known marginals distributions $F_1, F_2, F_3$ at $t=1$ but unknown dependence structure.
\item \textbf{Setting 1:} The marginals of $(S^1_1,S^2_1,S^3_1)$ are log-normally distributed, \textit{i.e.}
	\begin{align*}
		S^i_1=S_0^i \exp\Big(\sigma_i W_1^{(i)} -\frac{\sigma_i^2}{2}\Big)\,, \quad i=1,2,3
	\end{align*}
	where $W^{(i)}$ are standard Brownian motions, while the initial values and parameters are presented in Table \ref{table:data}.
\item \textbf{Setting 2:} The marginals of $(S^1_1,S^2_1,S^3_1)$ follow an exponential L\'evy model, \textit{i.e.}
      \begin{align*}
            S^i_1=S_0^i \exp\big( X_1^{(i)} \big)\,, \quad i=1,2,3
      \end{align*}
      where $X^{(i)}$ follows the normal inverse Gaussian distribution, \textit{i.e.} $X^{(i)}\sim \text{NIG}(\alpha_i,\beta_i,\delta_i,\mu_i)$. 
      The model parameters presented in the Table \ref{table:data} stem from calibrating the model to the option prices of the OBX, Coca Cola and StatoilHydro stocks, see \citet{Saebo_2009}, while the parameter $\mu$ is determined by the martingale condition.
      \begin{table}[h!]
      \begin{minipage}[b]{0.45\linewidth}
            \centering
            \begin{tabular}{c || c | c | c}
                  $i$ & 1 & 2 & 3 \\
                  \hline
                  $S^i_0$ & 8 & 10 & 12 \\
                  $\sigma_i$ & 1.5 & 1 & 0.5
            \end{tabular}
            % \label{table:BM-data}
            % \caption{Parameters for Setting 1.}
      \end{minipage}
      \begin{minipage}[b]{0.45\linewidth}
            \centering
            \begin{tabular}{c || c | c | c}
                  $i$ & 1 & 2 & 3 \\
                  \hline
                  $S^i_0$    &     100 &      100 & 100 \\
                  $\alpha_i$ &  8.9932 &  26.4502 & 9.7278 \\
                  $\beta_i$  & -4.5176 & -17.3990 & -3.2261 \\
                  $\delta_i$ &  1.1528 &   0.8872 & 1.1524
            \end{tabular}
            % \label{table:NIG-data}
            % \caption{Parameters for Setting 2.}
      \end{minipage}
      \caption{\label{table:data}Artificial parameters for the log-normal model (left) and parameters from calibrated real market data for the NIG model (right).}
      \end{table}
\item There exist two multi-asset derivatives $Z^1, Z^2$ ($q=2$), with payoff functions $z_1, z_2$ such that $z_1$ and $-z_2$ are $\Delta$-monotonic.
\item The payoff functions of $Z^1$ and $Z^2$ are provided by
\begin{align*}
	z_1(\mathbf{x}) = \big( \min\{x_1, x_2, x_3\}-K_1 \big)^+ \quad \text{ and } \quad
	z_2(\mathbf{x}) = \big( K_2-\min\{x_1, x_2, x_3\} \big)^+ 
\end{align*}
for $K_1, K_2\in\mathbb{R}_+$, \textit{i.e.} a call and a put option on the minimum of three assets.
\end{itemize}

%% ====================================================================================================================================================================================== %%

\subsection{Bounds for arbitrage-free prices within the two sub-markets}

We first consider the two sub-markets that consist of the three assets and each multi-asset derivative separately, \textit{i.e.} $(S^1, S^2, S^3, Z^1)$ and $(S^1, S^2, S^3, Z^2)$, and we are interested in deriving bounds for the arbitrage-free prices of $Z^1$ and $Z^2$. 
The functions $z_1$ and $-z_2$ are $\Delta$-monotonic, hence a lower and upper bound for $\Pi(Z^1)$ and $\Pi(Z^2)$ can be derived by the Fr\'echet--Hoeffding bounds; indeed, we have
\begin{align*}
\widehat{\pi}_{z_1}(\overline{W}_3)\le p\le\widehat{\pi}_{z_1}(\overline{M}_3),	&\ \text{ for every } p\in\Pi(Z^1)\,, \\
\widehat{\pi}_{z_2}(\overline{M}_3)\le p\le\widehat{\pi}_{z_2}(\overline{W}_3),	&\ \text{ for every } p\in\Pi(Z^2)\,.
\end{align*}
The support of the measures induced by $z_1$ and $z_2$ is one-dimensional and lies equally distributed along the diagonal, \textit{i.e.}
\begin{align*}
	\supp (\mu_{z_1}) &= \big\{ \mathbf{x}\in[K_1,\infty)^3 \,|\, x_1=x_2=x_3 \big\}\, , \\
	\supp (\mu_{z_2}) &= \big\{ \mathbf{x}\in[0,K_2]^3 \,|\, x_1=x_2=x_3 \big\} \,.
\end{align*}
Moreover, since $z_{1,I}\equiv 0$ and $z_{2,I}\equiv K_2$ for all $I$ with $|I|=1,2$, we get that $\mu_{z_{1,I}}=\mu_{z_{2,I}}=0$ . 
This also implies
\begin{align*}
	\int\limits_{\mathbb{R}_+} |z_{i,I}(x,x,x)| \ud F_{\mathbb Q}(\mathbf x) =0<\infty \,, \ \ i=1,2\,,
\end{align*}
while for $I=\{1,2,3\}$ we have
\begin{align*}
	\int\limits_{\mathbb{R}_+} |z_i(x,x,x)| \ud F_{\mathbb Q}(\mathbf x) = \mathbb{E}\big[ |z_i(S^1_1, S^2_1, S^3_1)| \big] < \infty \,, \ \ i=1,2\,.
\end{align*}

Hence, the expectation operator $\widehat{\pi}_{z_i}$ is well-defined for $i=1,2$ by \citet[Proposition 5.8]{lux2016}.
Let us also mention that $\mu_{z_1}$ and $-\mu_{z_2}$ are positive measures. 
Now, noting that $z_1(0,0,0)=0$ and $z_2(0,0,0)=K_2$, we deduce the followings bounds for $\Pi(Z^1)$ and $\Pi(Z^2)$:
\begin{align*}
	\widehat{\pi}_{z_1} \big(\overline{W}_3\big) &= \int\limits_{[K_1,\infty)} \overline{W}_3\big(F_1(x), F_2(x), F_3(x)\big) \ud x \,,\\
	\widehat{\pi}_{z_1} \big(\overline{M}_3\big) &= \int\limits_{[K_1,\infty)} \overline{M}_3\big(F_1(x), F_2(x), F_3(x)\big) \ud x\,,\\
	\widehat{\pi}_{z_2} \big(\overline{W}_3\big) &= K_2-\int\limits_{[0,K_2]} \overline{W}_3\big(F_1(x), F_2(x), F_3(x)\big) \ud x\,,\\
	\widehat{\pi}_{z_2} \big(\overline{M}_3\big) &= K_2-\int\limits_{[0,K_2]} \overline{M}_3\big(F_1(x), F_2(x), F_3(x)\big) \ud x\,.
\end{align*}
A numerical illustration of these bounds is depicted in Figures \ref{fig:PriceBoundsLognormal} and \ref{fig:PriceBoundsNIG} for the log-normal and the NIG distribution respectively.

\begin{figure}[h]
	\centering
	\includegraphics[scale=0.35]{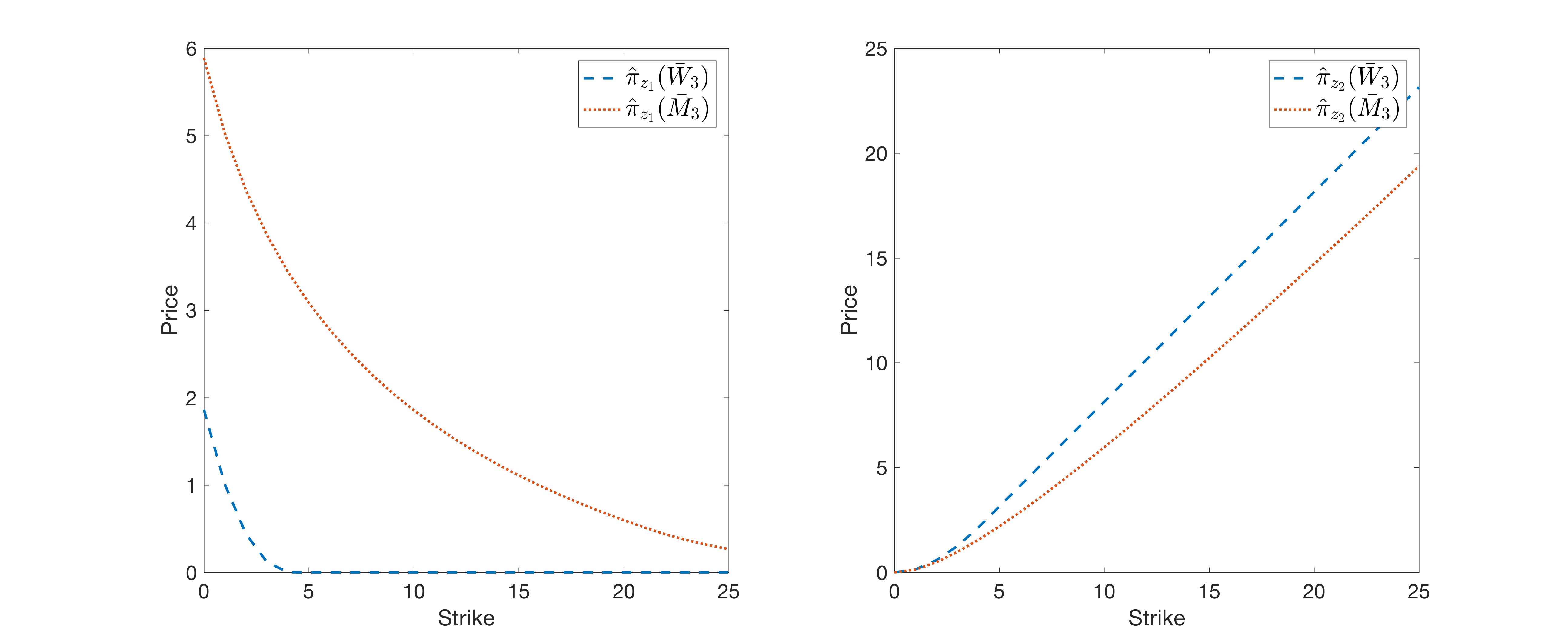}
	\caption{Bounds for $\Pi(Z^1)$ (left) and $\Pi(Z^2)$ (right) derived by the \FH bounds as a function of the strike; Setting 1.}
	\label{fig:PriceBoundsLognormal}
\end{figure}

\begin{figure}[h]
      \centering
      \includegraphics[scale=0.35]{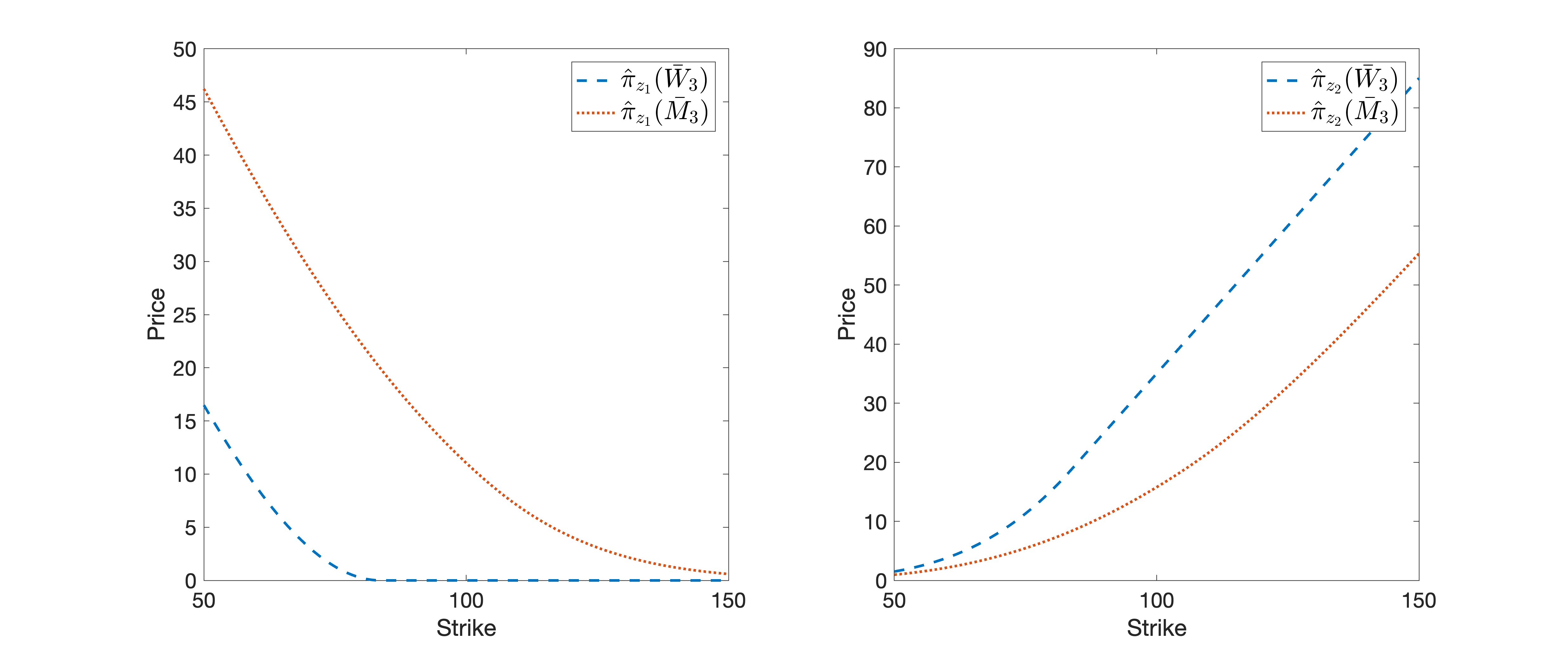}
      \caption{Bounds for $\Pi(Z^1)$ (left) and $\Pi(Z^2)$ (right) derived by the Fr\'echet--Hoeffding bounds as a function of the strike; Setting 2.}
      \label{fig:PriceBoundsNIG}
\end{figure}

\begin{remark}
Let us point out that the structure of the lower \FH bound, together with the properties of the measures induced by the payoff functions, lead to some restrictions for the price bounds of put and call options on the minimum of several assets.
Indeed, since $\overline{W}_d(\mathbf{u}) = W_d(\mathbf 1 - \mathbf u)$ is decreasing in every $u_i$ and floored by $0$, there exists an $x_0$ such that $\overline{W}_d\big(F_1(x), \dots, F_d(x)\big)=0$ for $x>x_0$. 
This $x_0$ depends on the marginal distributions. 
On the one hand, this yields for the lower bound of the call price that $\widehat{\pi}_{z_1}(\overline{W}_3)=0$ for $K_1\ge x_0$; see also Figure \ref{fig:PriceBoundsLognormal} (where the price bounds are plotted as functions of the strike). 
On the other hand, for the {upper} bound of the put price, we get that $\widehat{\pi}_{z_2}(\overline{W}_3) = K_2 + c$ for $K_2\ge x_0$ with constant $c\equiv\int_{[0,x_0]} \overline{W}_3\big(F_1(x), F_2(x), F_3(x)\big) \ud x$.
There is no equivalent statement for the other two bounds $\pi_{z_1}(\overline{M}_3)$ and $\pi_{z_2}(\overline{M}_3)$ because, in general, $F_i(x)<1$ for $x<\infty$.
\end{remark}

%% ====================================================================================================================================================================================== %%

\subsection{Detecting an arbitrage}

Finally, we present an application of the main result of this work, \textit{i.e.} Theorem \ref{thm:arbitrage}. 
More specifically, we detect an arbitrage in the market $(S^1, S^2, S^3, Z^1, Z^2)$ that contains three assets and two three-asset derivatives, even though the prices of $Z^1$ and $Z^2$ lie inside their respective no-arbitrage bounds. 
\citet{Tav15} searches for the global minimum of the objective function $f_{\mathrm{\mathrm{obj}}}(\mathbf{u}):=\overline Q_{\mathbf{p}}(\mathbf{u})-\underline Q_{\mathbf{p}}(\mathbf{u})$ over the unit square. 
However, it suffices to find a $\mathbf{u}^*$ such that $f_{\mathrm{obj}}(\mathbf{u}^*)<0$ and not necessarily the global minimum. 
Since we consider an additional dimension, we restrict ourselves to checking whether $f_{\mathrm{obj}}$ becomes negative or not.

In Setting 1, we consider the call and put option on the minimum of three assets $Z^1$ and $Z^2$ with strikes $K_1=3$ and $K_2=8$ respectively. 
Then we have approximately the following no-arbitrage bounds:
	\begin{align*}
		\Pi(Z^1)=[0.118 , 3.864] \quad \text{and} \quad \Pi(Z^2)=[4.374 , 6.138] \,.
	\end{align*}
Assume that the traded price for the call equals $3.5$ and the traded price for the put equals $6$, \textit{i.e.} $\mathbf{p}=(3.5,6)$. 
Obviously both prices lie within their respective no-arbitrage bounds, hence the two sub-markets where either $Z^1$ or $Z^2$ is the only multi-asset derivative are free of arbitrage. 
However, we numerically compute that
	\begin{align*}
		f_{\mathrm{obj}}(0.7,0.5,0.1)\approx -0.0952<0 \,,
	\end{align*}
therefore Theorem \ref{thm:arbitrage} yields that the market with both multi-asset derivatives is \textit{not} free of arbitrage, \textit{i.e.} $\mathbf{p}\notin\Pi(Z^1,Z^2)$. 
Figure $\ref{fobjLognormal}$ shows a plot of the objective function $f_{\mathrm{obj}}$. 
One can see clearly how $f_{\mathrm{obj}}$ drops below zero around $\mathbf{u}=(0.7,0.5,0.1)$. 

In Setting 2, we consider again the call and put option on the minimum of three assets $Z^1$ and $Z^2$ with strikes $K_1=60$ and $K_2=140$ respectively. 
Then we have approximately the following no-arbitrage bounds:
       \begin{align*}
             \Pi(Z^1)=[8.799 , 37.431] \quad \text{and} \quad \Pi(Z^2)=[45.94 , 75.018] \,.
       \end{align*}
Assume that the traded price for the call equals $9$ and the traded price for the put equals $46.1$, \textit{i.e.} $\mathbf{p}=(9,46.1)$. 
Obviously both prices lie within their respective no-arbitrage bounds, hence the two sub-markets where either $Z^1$ or $Z^2$ is the only multi-asset derivative are free of arbitrage. 
However, we numerically compute that
       \begin{align*}
             f_{\mathrm{obj}}(0.4,0.3,0.4)\approx -0.4905<0 \,,
       \end{align*}
therefore Theorem \ref{thm:arbitrage} yields that the market with both multi-asset derivatives is \textit{not} free of arbitrage, \textit{i.e.} $\mathbf{p}\notin\Pi(Z^1,Z^2)$. 
Figure $\ref{fobjNIG}$ shows a plot of the objective function $f_{\mathrm{obj}}$. 
One can see clearly how $f_{\mathrm{obj}}$ drops below zero around $\mathbf{u}=(0.4,0.3,0.4)$.

An intuitive explanation behind the appearance of arbitrage for the price vectors $\mathbf{p}=(3.5,6)$, resp. $\mathbf{p}=(9,46.1)$, could be as follows:
The prices for $Z^1$ and $Z^2$ are both taken from the upper, resp. lower, part of the intervals $\Pi(Z^1)$ and $\Pi(Z^2)$; however, one payoff function is non-decreasing and the other is non-   increasing with respect to the upper orthant order, which diminishes the possibility of finding a copula $C$ such that both $\widehat{\pi}_{z_1}(\widehat{C})=p_1$ and $\widehat{\pi}_{z_2}(\widehat{C})=p_2$
. 
A similar result in Setting 1 appears if we choose both prices close to the lower bounds; for example, for $\mathbf{p}=(0.3, 4.5)$ we get that
\begin{align*}
	f_{\mathrm{obj}}(0.7,0.5,0.1)\approx -0.1257<0 \,.
\end{align*}
However, we have not noticed something analogous in Setting 2 when selecting both prices close to the upper bounds.
On the other hand, if we select a price away from the upper bound for $Z^2$, \textit{e.g.} $\mathbf{p}=(3.5, 4.5)$ in Setting 1, then the objective function does not become negative any longer.
Indeed, we find that the global minimum of the objective function $f_{\mathrm{obj}}$ is zero, and is attained for $u_i=0$ or $u_i=1$ for some $i=1,2,3$, \textit{i.e.} on the boundaries of the unit cube $[0,1]^3$. 
Let us point out again that this does not necessarily imply that the market is free of arbitrage, since Theorem \ref{thm:arbitrage} only provides a necessary condition.

\begin{figure}[h!]
	\begin{subfigure}[t]{0.03\textwidth}
		\textrm{(a)}
	\end{subfigure}
	\begin{subfigure}[t]{0.3\textwidth}
		\centering
		\includegraphics[scale=0.33]{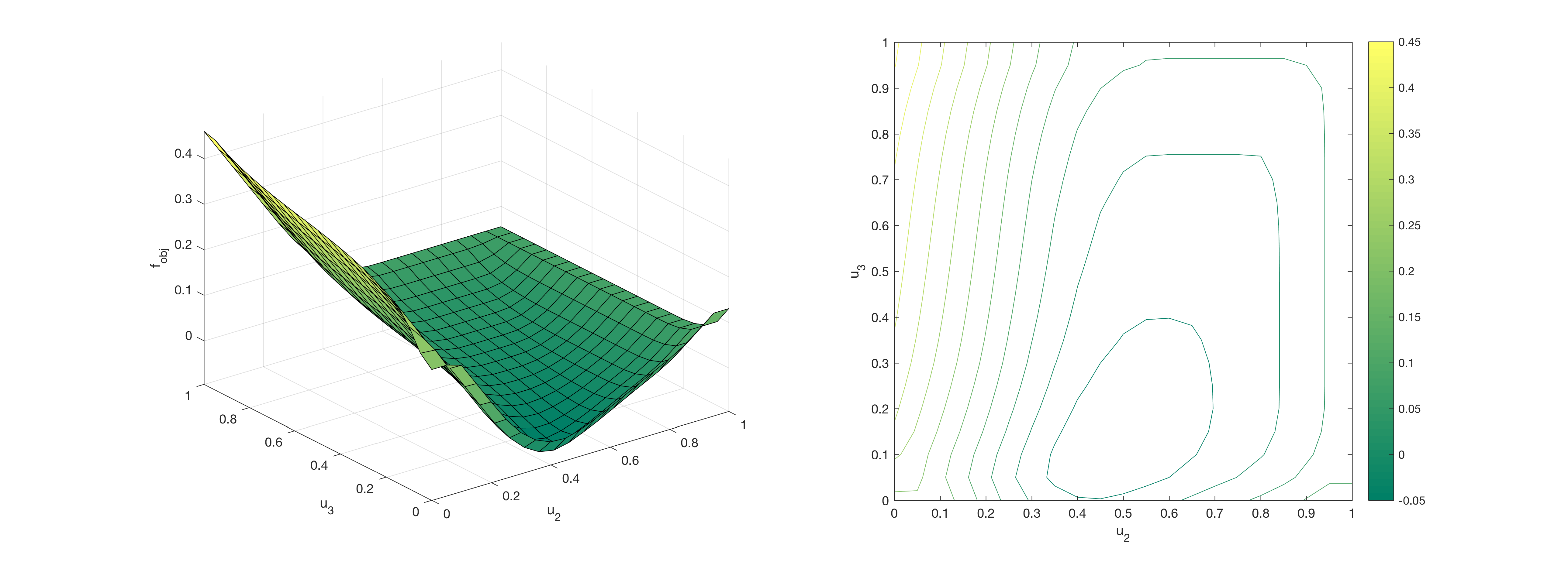}
	\end{subfigure}

	\begin{subfigure}[t]{0.03\textwidth}
		\textrm{(b)}
	\end{subfigure}
	\begin{subfigure}[t]{0.3\textwidth}
		\centering
		\includegraphics[scale=0.33]{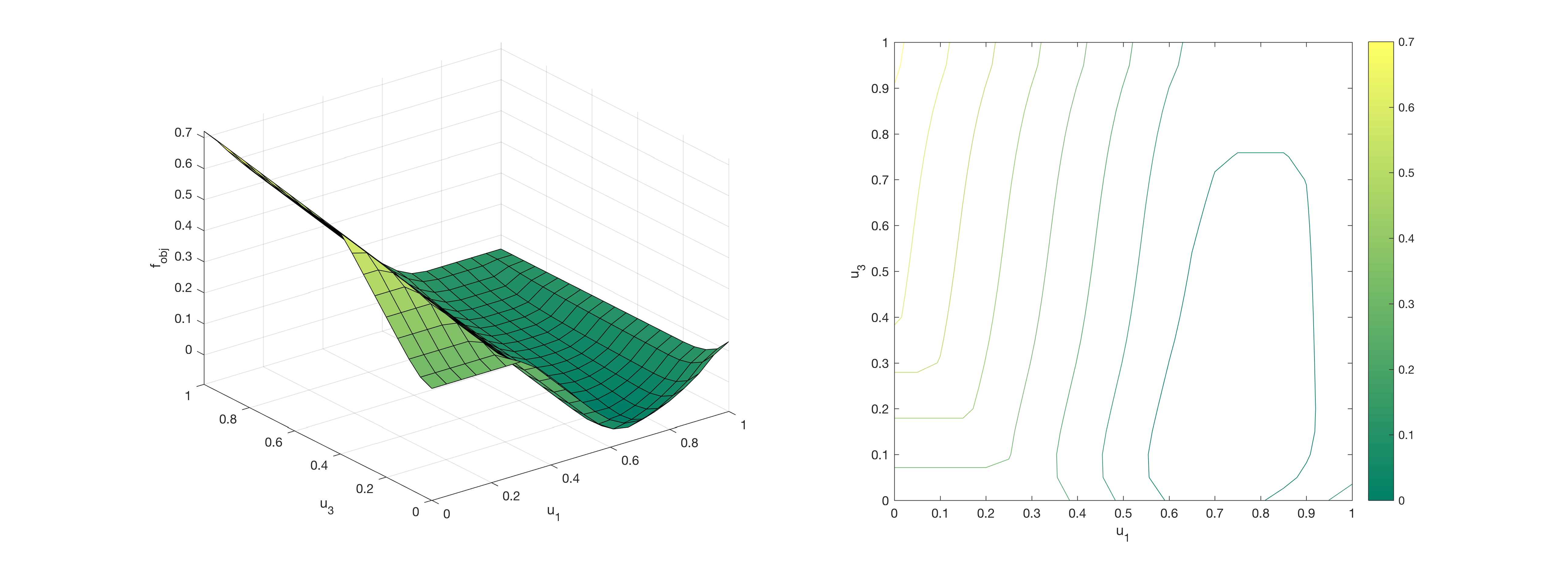}
	\end{subfigure}
	
	\begin{subfigure}[t]{0.03\textwidth}
		\textrm{(c)}
	\end{subfigure}
	\begin{subfigure}[t]{0.3\textwidth}
		\centering
		\includegraphics[scale=0.33]{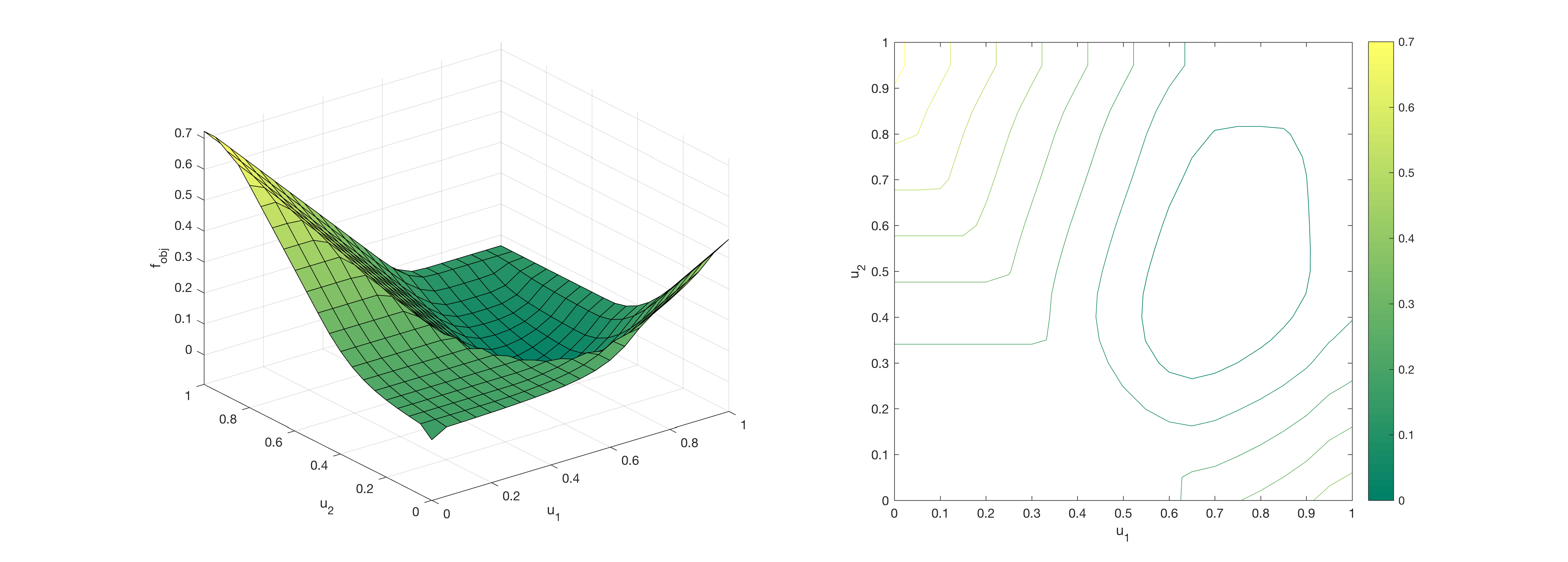}
	\end{subfigure}
	\caption{Values and contour plots of the objective function $f_{\mathrm{obj}}$ for $\mathbf{p}=(3.5,6)$ and for the marginals restricted on (a) $u_1=0.7$, (b) $u_2=0.5$, (c) $u_3=0.1$; Setting 1.}
	\label{fobjLognormal}
\end{figure}

\begin{figure}[h!]
      \begin{subfigure}[t]{0.03\textwidth}
            \textrm{(a)}
      \end{subfigure}
      \begin{subfigure}[t]{0.3\textwidth}
            \centering
            \includegraphics[scale=0.33]{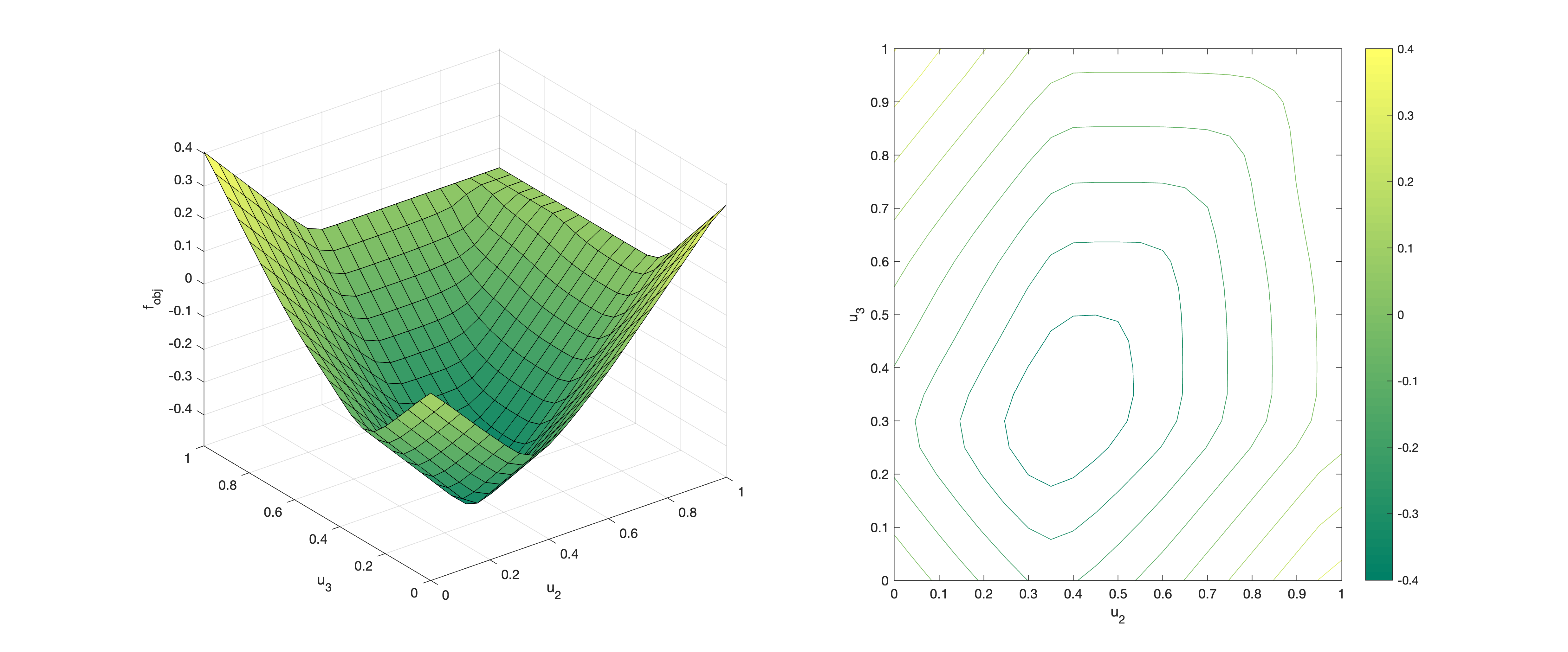}
      \end{subfigure}

      \begin{subfigure}[t]{0.03\textwidth}
            \textrm{(b)}
      \end{subfigure}
      \begin{subfigure}[t]{0.3\textwidth}
            \centering
            \includegraphics[scale=0.33]{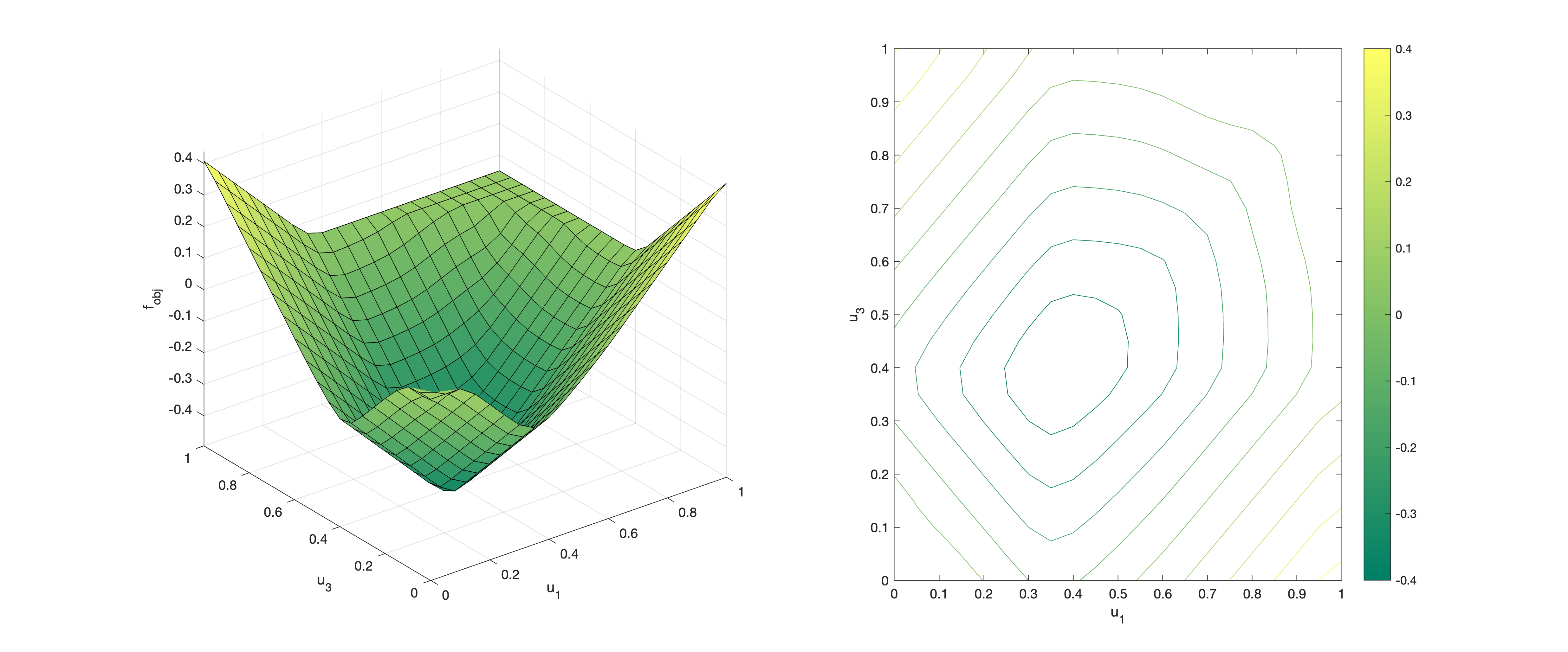}
      \end{subfigure}
      
      \begin{subfigure}[t]{0.03\textwidth}
            \textrm{(c)}
      \end{subfigure}
      \begin{subfigure}[t]{0.3\textwidth}
            \centering
            \includegraphics[scale=0.33]{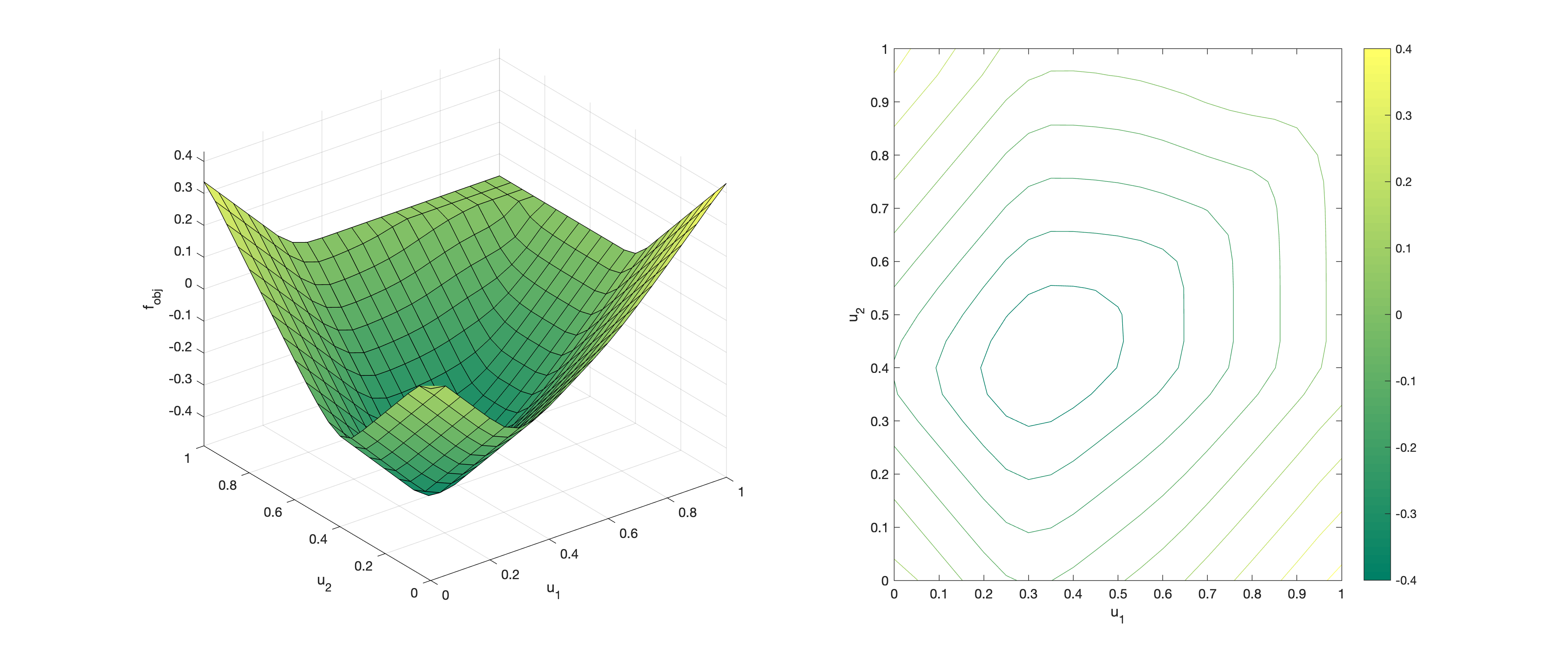}
      \end{subfigure}
      \caption{Values and contour plots of the objective function $f_{\mathrm{obj}}$ for $\mathbf{p}=(9,46.1)$ and for the marginals restricted on (a) $u_1=0.4$, (b) $u_2=0.3$, (c) $u_3=0.4$; Setting 2.}
      \label{fobjNIG}
\end{figure}

%% ====================================================================================================================================================================================== %%
%% ====================================================================================================================================================================================== %%

\appendix

\section{Improved \FH bounds for non-increasing functionals}
\label{app:B}

The following two theorems cover the case when the map $\rho$ is non-increasing with respect to the orthant orders. 
This appears in our work when the negation of the payoff function, say $-\rho$, is either $\Delta$-monotonic or $\Delta$-antitonic. 
In that case, we get that $\rho(M_d)\le\rho(W_d)$. 
The proofs of these results are omitted for the sake of brevity, as they are completely analogous to the proofs of Theorems 3.3 and A.2 in \citet{lux2016}.

\begin{theorem}
Let $\rho:\mathcal{Q}^d\rightarrow\mathbb{R}$ be non-increasing with respect to the lower orthant order and continuous with respect to the pointwise convergence of quasi-copulas. 
Let $\theta\in[\rho(M_d),\rho(W_d)]$  and define
\begin{align*}
	\mathcal{Q}^{\rho,\theta} := \big\{ Q\in\mathcal{Q}^d \,|\, \rho(Q)=\theta \big\} \,.
\end{align*}
Then, for all $Q\in\mathcal{Q}^{\rho,\theta}$, holds
\begin{align*}
	Q_L^{\rho,\theta}(\mathbf{u})\le Q(\mathbf{u})\le Q_U^{\rho,\theta}(\mathbf{u}) \quad\textit{for all }\mathbf{u}\in[0,1]^d\,,
\end{align*}
with
\begin{align*}
Q_L^{\rho,\theta}(\mathbf{u})
	:=&\begin{cases} 
		\rho^{-1}_+(\mathbf{u},\theta)\,,\quad \textit{if}\,\, \theta\in[\rho(M_d),\rho_+(\mathbf{u}, W_d(\mathbf{u}))], \\
		M_d(\mathbf{u})\,, \quad\quad \textit{otherwise}\,,
	\end{cases} \\
Q_U^{\rho,\theta}(\mathbf{u})
	:=&\begin{cases} 
		\rho^{-1}_-(\mathbf{u},\theta)\,,\quad \textit{if}\,\, \theta\in[\rho_-(\mathbf{u},M_d(\mathbf{u})),\rho(W_d)], \\
		W_d(\mathbf{u})\,, \quad\quad \textit{otherwise}.
	\end{cases}
\end{align*}
\end{theorem}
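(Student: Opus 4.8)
The plan is to obtain the statement from the version of the improved \FH bounds for \emph{non-decreasing} functionals, \eqref{eq:iFH-func-L}--\eqref{eq:iFH-func-U} (i.e.\ \citet[Theorem~3.3]{lux2016}), by replacing $\rho$ with $\sigma:=-\rho$. First I would observe that, since $\rho$ is non-increasing with respect to $\preceq_{LO}$ and continuous with respect to pointwise convergence of quasi-copulas, $\sigma$ is non-decreasing with respect to $\preceq_{LO}$ and continuous; that with $\vartheta:=-\theta$ the hypothesis $\theta\in[\rho(M_d),\rho(W_d)]$ becomes $\vartheta\in[-\rho(W_d),-\rho(M_d)]=[\sigma(W_d),\sigma(M_d)]$; and that the constrained sets coincide, $\mathcal{Q}^{\sigma,\vartheta}=\{Q\in\mathcal{Q}^d\,|\,\sigma(Q)=\vartheta\}=\{Q\in\mathcal{Q}^d\,|\,\rho(Q)=\theta\}=\mathcal{Q}^{\rho,\theta}$. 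Hence \eqref{eq:iFH-func-L}--\eqref{eq:iFH-func-U}, applied to the pair $(\sigma,\vartheta)$, already yields for every $Q\in\mathcal{Q}^{\rho,\theta}$ and every $\mathbf{u}\in[0,1]^d$ the sandwich $Q_L^{\sigma,\vartheta}(\mathbf{u})\le Q(\mathbf{u})\le Q_U^{\sigma,\vartheta}(\mathbf{u})$.

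It then remains to rewrite $Q_L^{\sigma,\vartheta}$ and $Q_U^{\sigma,\vartheta}$ in terms of $\rho$ and $\theta$, which is pure sign-bookkeeping. Because the one-point constrained bounds $Q_L^{\{\mathbf{u}\},r}$, $Q_U^{\{\mathbf{u}\},r}$ do not depend on the functional, we have $\sigma_\pm(\mathbf{u},r)=\sigma\big(Q^{\{\mathbf{u}\},r}_{L/U}\big)=-\rho\big(Q^{\{\mathbf{u}\},r}_{L/U}\big)=-\rho_\pm(\mathbf{u},r)$ for all $r\in I_{\mathbf{u}}$; in particular the maps $\sigma_\pm(\mathbf{u},\cdot)$ and $\rho_\pm(\mathbf{u},\cdot)$ have the same level sets, so $\sigma_-^{-1}(\mathbf{u},\vartheta)=\rho_-^{-1}(\mathbf{u},\theta)$ and $\sigma_+^{-1}(\mathbf{u},\vartheta)=\rho_+^{-1}(\mathbf{u},\theta)$. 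Likewise $\sigma(M_d)=-\rho(M_d)$, $\sigma(W_d)=-\rho(W_d)$, $\sigma_+(\mathbf{u},W_d(\mathbf{u}))=-\rho_+(\mathbf{u},W_d(\mathbf{u}))$ and $\sigma_-(\mathbf{u},M_d(\mathbf{u}))=-\rho_-(\mathbf{u},M_d(\mathbf{u}))$, so multiplying the case-distinction intervals by $-1$ turns $\vartheta\in[\sigma_+(\mathbf{u},W_d(\mathbf{u})),\sigma(M_d)]$ into $\theta\in[\rho(M_d),\rho_+(\mathbf{u},W_d(\mathbf{u}))]$ and $\vartheta\in[\sigma(W_d),\sigma_-(\mathbf{u},M_d(\mathbf{u}))]$ into $\theta\in[\rho_-(\mathbf{u},M_d(\mathbf{u})),\rho(W_d)]$. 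Substituting back identifies $Q_L^{\sigma,\vartheta}$ and $Q_U^{\sigma,\vartheta}$ with the bounds $Q_L^{\rho,\theta}$ and $Q_U^{\rho,\theta}$ in the statement, which finishes the proof.

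I do not expect a genuine obstacle here: the only substantive input is \citet[Theorem~3.3]{lux2016}, which may be assumed, and everything else is the sign translation above. The one place that warrants care is that after the sign flip the maps $r\mapsto\rho_\pm(\mathbf{u},r)$ become non-\emph{increasing} (they equal $-\sigma_\pm(\mathbf{u},r)$, with $r\mapsto\sigma_\pm(\mathbf{u},r)$ non-decreasing and continuous because $r\mapsto Q^{\{\mathbf{u}\},r}_{L/U}$ is $\preceq_{LO}$-non-decreasing); this does not affect anything, since $\sigma_\pm(\mathbf{u},\cdot)$ and $\rho_\pm(\mathbf{u},\cdot)$ share the same level sets and hence the $\max$, resp.\ $\min$, in the definitions of $\rho_-^{-1}$ and $\rho_+^{-1}$ picks the same $r$ under either parametrisation. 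As an alternative to this reduction, one could instead repeat verbatim the proofs of \citet[Theorem~3.3 and Prop.~A.2]{lux2016}, reversing every orthant-order inequality that involves $\rho$; this is the route the text alludes to, but the reduction above is shorter.
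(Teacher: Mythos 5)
Your reduction via $\sigma:=-\rho$ is sound and is a genuinely different route from the paper's: the paper omits the proof and simply re-runs the argument of Theorems 3.3 and A.2 of \citet{lux2016} with every orthant-order inequality involving $\rho$ reversed, whereas you obtain the result as a formal corollary of the non-decreasing case, with no new argument. The sign bookkeeping for $\sigma_\pm$, for the inverse maps (same level sets, hence the same $\min$/$\max$), and for the case-distinction intervals is all correct, and the reduction is shorter and less error-prone than rewriting the original proof.

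There is, however, one concrete mismatch at your final step, ``substituting back identifies $Q_L^{\sigma,\vartheta}$ and $Q_U^{\sigma,\vartheta}$ with the bounds in the statement.'' The substitution leaves the \emph{otherwise} branches of \eqref{eq:iFH-func-L}--\eqref{eq:iFH-func-U} untouched (only the interval endpoints are negated and swapped), so your reduction yields $W_d(\mathbf{u})$ as the default value of $Q_L^{\rho,\theta}$ and $M_d(\mathbf{u})$ as the default value of $Q_U^{\rho,\theta}$, whereas the theorem as printed has these two transposed. The two versions are not equivalent: with $M_d(\mathbf{u})$ as the default \emph{lower} bound, the statement would force $Q(\mathbf{u})=M_d(\mathbf{u})$ whenever $\theta>\rho_+(\mathbf{u},W_d(\mathbf{u}))$, which fails for any $Q$ with $Q(\mathbf{u})=W_d(\mathbf{u})$ and $\rho(Q)=\theta\in\big(\rho_+(\mathbf{u},W_d(\mathbf{u})),\rho_-(\mathbf{u},W_d(\mathbf{u}))\big]$; such $Q$ exist by applying the intermediate value theorem to $\lambda\mapsto\rho\big(\lambda Q_L^{\{\mathbf{u}\},W_d(\mathbf{u})}+(1-\lambda)Q_U^{\{\mathbf{u}\},W_d(\mathbf{u})}\big)$. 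So what you prove (correctly) is the version with defaults $W_d$ for the lower and $M_d$ for the upper bound; the transposed defaults in the displayed theorem appear to be a transcription slip, but as written your last sentence does not match the target formulas. You should either state the corrected default branches explicitly or flag the discrepancy rather than assert the identification.
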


\begin{theorem}
Let $\rho:\mathcal{C}^d\rightarrow\mathbb{R}$ be non-increasing with respect to the upper orthant order and continuous with respect to the pointwise convergence of copulas. 
Let $\theta\in[\widehat{\rho}(\overline{M}_d),\widehat{\rho}(\overline{W}_d)]$ and define
\begin{align*}
	\widehat{\mathcal{C}}^{\rho,\theta} := \big\{ C\in\mathcal{C}^d \,|\, \widehat{\rho}(\widehat{C})=\theta \big\}.
\end{align*}
Then, for all $\widehat{C}\in\widehat{\mathcal{C}}^{\rho,\theta}$, holds
\begin{align*}
	\widehat{Q}_L^{\rho,\theta}(\mathbf{u})\le\widehat{C}(\mathbf{u})\le\widehat{Q}_U^{\rho,\theta}(\mathbf{u}) \quad\textit{for all}\,\,\mathbf{u}\in[0,1]^d\,,
\end{align*}
with
\begin{align*}
\widehat{Q}_L^{\rho,\theta}(\mathbf{u})
	:=&\begin{cases} 
		\widehat{\rho}^{\,-1}_+(\mathbf{u},\theta)\,,\quad \textit{if}\,\, \theta\in[\widehat{\rho}(\overline{M}_d),\widehat{\rho}_+(\mathbf{u}, \overline{W}_d(\mathbf{u}))], \\
		\overline{M}_d(\mathbf{u})\,, \quad\quad \textit{otherwise}\,,
	\end{cases} \\
\widehat{Q}_U^{\rho,\theta}(\mathbf{u})
	:=&\begin{cases} 
		\widehat{\rho}^{\,-1}_-(\mathbf{u},\theta)\,,\quad \textit{if}\,\, \theta\in[\widehat{\rho}_-(\mathbf{u},\overline{M}_d(\mathbf{u})),\widehat{\rho}(\overline{W}_d)], \\
		\overline{W}_d(\mathbf{u})\,, \quad\quad\textit{otherwise.}
	\end{cases}
\end{align*}
\end{theorem}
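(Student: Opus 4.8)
The plan is to mirror the proof of the functional-constraint improved \FH bounds of Section~\ref{sec:main-properties} (the proof of Theorem~3.3 of \citet{lux2016}), adapting it in two directions at once: to survival functions and the upper orthant order, exactly as Theorem~A.2 of \citet{lux2016} adapts Theorem~3.3, and to a non-increasing functional, exactly as the preceding theorem of this appendix adapts the statement of Section~\ref{sec:main-properties}. Equivalently, one may simply observe that $-\rho$ is non-decreasing with respect to $\preceq_{UO}$, with $-\theta\in[\widehat{(-\rho)}(\overline{W}_d),\widehat{(-\rho)}(\overline{M}_d)]$, so that the survival, non-decreasing result applies verbatim to the pair $(-\rho,-\theta)$ and one is left only with relabelling $\widehat{\rho}_\pm$ and $\widehat{\rho}_\pm^{-1}$; I would nevertheless carry out the argument directly, since the relabelling is where essentially all the care lies.

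The argument is pointwise in $\mathbf{u}$. Fix $\mathbf{u}\in[0,1]^d$ and $\widehat{C}\in\widehat{\mathcal{C}}^{\rho,\theta}$, and set $s:=\widehat{C}(\mathbf{u})$, so $s\in[\overline{W}_d(\mathbf{u}),\overline{M}_d(\mathbf{u})]$. Treating the value at $\mathbf{u}$ as a single-point constraint and invoking the survival analogue of the single-point improved \FH bounds (\citet[Appendix~A]{lux2016}) gives $\widehat{Q}_L^{\{\mathbf{u}\},s}\le\widehat{C}\le\widehat{Q}_U^{\{\mathbf{u}\},s}$ pointwise, that is, the ordering $Q_L^{\{\mathbf{u}\},s}\preceq_{UO}C\preceq_{UO}Q_U^{\{\mathbf{u}\},s}$ of the corresponding copulas. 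Applying $\rho$, which is non-increasing with respect to $\preceq_{UO}$, reverses the inequalities:
\[
\widehat{\rho}_+(\mathbf{u},s)=\widehat{\rho}\big(\widehat{Q}_U^{\{\mathbf{u}\},s}\big)\ \le\ \theta\ \le\ \widehat{\rho}\big(\widehat{Q}_L^{\{\mathbf{u}\},s}\big)=\widehat{\rho}_-(\mathbf{u},s).
\]
Since $r\mapsto\widehat{Q}_L^{\{\mathbf{u}\},r}$ and $r\mapsto\widehat{Q}_U^{\{\mathbf{u}\},r}$ are pointwise non-decreasing and continuous in $r$, and $\widehat{\rho}$ is non-increasing with respect to $\preceq_{UO}$ and continuous for pointwise convergence of quasi-copulas, the maps $r\mapsto\widehat{\rho}_\pm(\mathbf{u},r)$ are non-increasing and continuous on $[\overline{W}_d(\mathbf{u}),\overline{M}_d(\mathbf{u})]$, so that the inverses $\widehat{\rho}_\pm^{-1}(\mathbf{u},\cdot)$ are well defined on the intervals indicated in the statement. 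Solving $\widehat{\rho}_+(\mathbf{u},s)\le\theta$ for $s$ produces the lower estimate $s\ge\widehat{\rho}_+^{-1}(\mathbf{u},\theta)=\widehat{Q}_L^{\rho,\theta}(\mathbf{u})$, solving $\theta\le\widehat{\rho}_-(\mathbf{u},s)$ produces the upper estimate $s\le\widehat{\rho}_-^{-1}(\mathbf{u},\theta)=\widehat{Q}_U^{\rho,\theta}(\mathbf{u})$, and when $\theta$ falls outside the range on which the relevant inverse is defined the corresponding inequality is vacuous and the bound collapses to the appropriate plain \FH survival bound, $\overline{M}_d(\mathbf{u})$ or $\overline{W}_d(\mathbf{u})$, which is the ``otherwise'' branch. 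As $\mathbf{u}$ and $\widehat{C}$ were arbitrary, the two-sided estimate follows. That $\widehat{Q}_L^{\rho,\theta}$ and $\widehat{Q}_U^{\rho,\theta}$ are themselves survival functions of $d$-quasi-copulas, and are in fact the pointwise infimum and supremum over $\widehat{\mathcal{C}}^{\rho,\theta}$, is then obtained verbatim as in \citet[Appendix~A]{lux2016}.

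No step is genuinely hard; the one place that demands real care — and which I would check most carefully — is the sign and case bookkeeping. In contrast with the lower-orthant, non-decreasing situation of Section~\ref{sec:main-properties}, here $\widehat{\rho}$ reverses orderings, so $r\mapsto\widehat{\rho}_\pm(\mathbf{u},r)$ is \emph{non-increasing} and the inverses must be read off with the reversed orientation; one must confirm that it is $\widehat{\rho}_+$ that controls the lower bound and $\widehat{\rho}_-$ the upper bound and not the reverse; and one must pin down the exact endpoints of the ``if'' intervals together with which of $\overline{M}_d(\mathbf{u})$, $\overline{W}_d(\mathbf{u})$ appears in each ``otherwise'' branch, by tracking which endpoint of $[\widehat{\rho}(\overline{M}_d),\widehat{\rho}(\overline{W}_d)]$ is cut off when the single-point bound degenerates. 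Once these orientations are fixed, what remains is exactly the routine inversion argument already carried out in \citet{lux2016}.
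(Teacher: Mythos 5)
Your strategy is exactly the one the paper itself endorses: the paper gives no proof of this theorem, saying only that it is ``completely analogous to the proofs of Theorems 3.3 and A.2 in \citet{lux2016}'', and your pointwise single-point-constraint argument (with the reduction to $(-\rho,-\theta)$ as an alternative) is precisely that adaptation. The chain $\widehat{\rho}_+(\mathbf{u},s)\le\theta\le\widehat{\rho}_-(\mathbf{u},s)$ for $s=\widehat{C}(\mathbf{u})$, the monotonicity and continuity of $r\mapsto\widehat{\rho}_\pm(\mathbf{u},r)$, and the identification of $\widehat{\rho}_+$ with the lower bound and $\widehat{\rho}_-$ with the upper bound are all correct.

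The problem is the one piece of bookkeeping you explicitly postpone. Carrying it out, your own argument yields: when $\theta$ lies outside the range of $\widehat{\rho}_+(\mathbf{u},\cdot)$ the constraint $\widehat{\rho}_+(\mathbf{u},s)\le\theta$ is vacuous, so the lower bound collapses to the unconditional lower survival bound $\overline{W}_d(\mathbf{u})$; dually, the upper bound collapses to $\overline{M}_d(\mathbf{u})$. The printed statement has these the other way around ($\overline{M}_d$ in the ``otherwise'' branch of $\widehat{Q}_L^{\rho,\theta}$ and $\overline{W}_d$ in that of $\widehat{Q}_U^{\rho,\theta}$), and as printed it is false. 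Concretely, take $d=2$, $\widehat{\rho}(\widehat{C})=-\widehat{C}(1/2,1/2)$ (non-increasing in $\preceq_{UO}$, with $\theta\in[-1/2,0]$), $\theta=-0.3$ and $\mathbf{u}=(0.9,0.9)$: one checks $\widehat{\rho}_+(\mathbf{u},\overline{W}_2(\mathbf{u}))=-1/2$, so one is in the ``otherwise'' branch and the printed lower bound equals $\overline{M}_2(\mathbf{u})=0.1$; yet $C=0.2\,M_2+0.8\,C^{\perp}$, with $C^{\perp}$ the independence copula, satisfies $\widehat{C}(1/2,1/2)=0.3$ while $\widehat{C}(0.9,0.9)=0.028<0.1$. (The same swap occurs in the first theorem of the appendix; note that in Section 2, for non-decreasing $\rho$, the defaults are correctly $W_d$ for the lower and $M_d$ for the upper bound --- the monotonicity of $\rho$ changes which of $\rho_\pm$ governs which bound, but not which unconditional \FH bound serves as the vacuous default.) So your proof, once completed, establishes a corrected statement rather than the one printed; you need either to finish the orientation check and flag the discrepancy, or to accept that the theorem as stated is not what your argument proves.
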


%% ====================================================================================================================================================================================== %%
%% ====================================================================================================================================================================================== %%

% \subsection*{Acknowledgments}

% We thank Thibaut Lux for fruitful discussions during the work on these topics. AP gratefully acknowledges the financial support from the Hellenic Foundation for Research and Innovation Grant No. HFRI-FM17-2152.

% \subsection*{Conflict of interest statement}

% The authors state no conflict of interest.

%% ====================================================================================================================================================================================== %%
%% ====================================================================================================================================================================================== %%

%% ====================================================================================================================================================================================== %%
%% ====================================================================================================================================================================================== %%


\begin{thebibliography}{33}
\providecommand{\natexlab}[1]{#1}
\providecommand{\url}[1]{\texttt{#1}}
\expandafter\ifx\csname urlstyle\endcsname\relax
  \providecommand{\doi}[1]{doi: #1}\else
  \providecommand{\doi}{doi: \begingroup \urlstyle{rm}\Url}\fi

\bibitem[Bartl et~al.()Bartl, Kupper, Lux, Papapantoleon, and
  Eckstein]{Bartl_Kupper_Lux_Papapantoleon_2017}
D.~Bartl, M.~Kupper, T.~Lux, A.~Papapantoleon, and S.~Eckstein.
\newblock Marginal and dependence uncertainty: bounds, optimal transport, and
  sharpness.
\newblock \emph{SIAM J. Control Optim.}
\newblock (forthcoming).

\bibitem[Bernard et~al.(2012)Bernard, Jiang, and Vanduffel]{bernard}
C.~Bernard, X.~Jiang, and S.~Vanduffel.
\newblock A note on `{I}mproved {Fr\'echet} bounds and model-free pricing of
  multi-asset options' by {Tankov} (2011).
\newblock \emph{J. Appl. Probab.}, 49:\penalty0 866--875, 2012.

\bibitem[Black and Scholes(1973)]{Black_Scholes_1973}
F.~Black and M.~Scholes.
\newblock The pricing of options and corporate liabilities.
\newblock \emph{J. Polit. Econ.}, 81:\penalty0 637--654, 1973.

\bibitem[Breeden and Litzenberger(1978)]{breeden}
D.~Breeden and R.~Litzenberger.
\newblock Prices of state-contingent claims implicit in options prices.
\newblock \emph{J. Business}, 51:\penalty0 621--651, 1978.

\bibitem[Buehler(2006)]{Buehler_2006}
H.~Buehler.
\newblock Expensive martingales.
\newblock \emph{Quant. Finance}, 6\penalty0 (3):\penalty0 207--218, 2006.

\bibitem[Carr and Madan(2001)]{Carr_Madan_2001}
P.~Carr and D.~Madan.
\newblock Optimal positioning in derivative securities.
\newblock \emph{Quant. Finance}, 1\penalty0 (1):\penalty0 19--37, 2001.

\bibitem[Carr and Madan(2005)]{Carr_Madan_2005}
P.~Carr and D.~Madan.
\newblock A note on sufficient conditions for no arbitrage.
\newblock \emph{Finance Res. Let.}, 2:\penalty0 125--130, 2005.

\bibitem[Cousot(2007)]{Cousot_2007}
L.~Cousot.
\newblock Conditions on option prices for absence of arbitrage and exact
  calibration.
\newblock \emph{J. Banking Finance}, 31:\penalty0 3377--3397, 2007.

\bibitem[d'Aspremont and El~Ghaoui(2006)]{dAspremont_ElGhaoui_2006}
A.~d'Aspremont and L.~El~Ghaoui.
\newblock Static arbitrage bounds on basket option prices.
\newblock \emph{Math. Program.}, 106\penalty0 (3, Ser. A):\penalty0 467--489,
  2006.

\bibitem[Davis and Hobson(2007)]{Davis_Hobson_2007}
M.~H.~A. Davis and D.~G. Hobson.
\newblock The range of traded option prices.
\newblock \emph{Math. Finance}, 17:\penalty0 1--14, 2007.

\bibitem[Dhaene et~al.(2002{\natexlab{a}})Dhaene, Denuit, Goovaerts, Kaas, and
  Vyncke]{Dhaene_etal_2002_a}
J.~Dhaene, M.~Denuit, M.~J. Goovaerts, R.~Kaas, and D.~Vyncke.
\newblock The concept of comonotonicity in actuarial science and finance:
  theory.
\newblock \emph{Insurance Math. Econom.}, 31:\penalty0 3--33,
  2002{\natexlab{a}}.

\bibitem[Dhaene et~al.(2002{\natexlab{b}})Dhaene, Denuit, Goovaerts, Kaas, and
  Vyncke]{Dhaene_etal_2002_b}
J.~Dhaene, M.~Denuit, M.~J. Goovaerts, R.~Kaas, and D.~Vyncke.
\newblock The concept of comonotonicity in actuarial science and finance:
  applications.
\newblock \emph{Insurance Math. Econom.}, 31:\penalty0 133--161,
  2002{\natexlab{b}}.

\bibitem[Eberlein(2001)]{Eberlein01a}
E.~Eberlein.
\newblock {Application of generalized hyperbolic L{\'e}vy motions to finance}.
\newblock In O.~E. Barndorff-Nielsen, T.~Mikosch, and S.~I. Resnick, editors,
  \emph{{L{\'e}vy Processes: Theory and Applications}}, pages 319--336.
  Birkh{\"a}user, 2001.

\bibitem[Gerhold and G\"ul\"um(2020)]{Gerhold_Gulum_2020}
S.~Gerhold and I.~C. G\"ul\"um.
\newblock Consistency of option prices under bid-ask spreads.
\newblock \emph{Math. Finance}, 30:\penalty0 377--402, 2020.

\bibitem[Hobson et~al.(2005{\natexlab{a}})Hobson, Laurence, and
  Wang]{Hobson_Laurence_Wang_2005_1}
D.~Hobson, P.~Laurence, and T.-H. Wang.
\newblock Static-arbitrage upper bounds for the prices of basket options.
\newblock \emph{Quant. Finance}, 5:\penalty0 329--342, 2005{\natexlab{a}}.

\bibitem[Hobson et~al.(2005{\natexlab{b}})Hobson, Laurence, and
  Wang]{Hobson_Laurence_Wang_2005_2}
D.~Hobson, P.~Laurence, and T.-H. Wang.
\newblock Static-arbitrage optimal subreplicating strategies for basket
  options.
\newblock \emph{Insurance Math. Econom.}, 37:\penalty0 553--572,
  2005{\natexlab{b}}.

\bibitem[Laurent and Leisen(2001)]{Laurent_Leisen_2000}
J.-P. Laurent and D.~Leisen.
\newblock Building a consistent pricing model from observed option prices.
\newblock In M.~Avellaneda, editor, \emph{Quantitative Analysis In Financial
  Markets: Collected Papers of the New York University Mathematical Finance
  Seminar (Volume II)}. World Scientific, 2001.

\bibitem[Lux and Papapantoleon(2017)]{lux2016}
T.~Lux and A.~Papapantoleon.
\newblock Improved {F}r\'echet--{H}oeffding bounds on $d$-copulas and
  applications in model-free finance.
\newblock \emph{Ann. Appl. Probab.}, 27:\penalty0 3633--3671, 2017.

\bibitem[Lux and Papapantoleon(2019)]{Lux_Papapantoleon_2019}
T.~Lux and A.~Papapantoleon.
\newblock Model-free bounds on {V}alue-at-{R}isk using extreme value
  information and statistical distances.
\newblock \emph{Insurance Math. Econom.}, 86:\penalty0 73--83, 2019.

\bibitem[M{\"u}ller and Stoyan(2002)]{mueller}
A.~M{\"u}ller and D.~Stoyan.
\newblock \emph{Comparison Methods for Stochastic Models and Risks}.
\newblock Wiley, 2002.

\bibitem[Nelsen(2006)]{Nel06}
R.~B. Nelsen.
\newblock \emph{An Introduction to Copulas}.
\newblock Springer, 2nd edition, 2006.

\bibitem[Neufeld et~al.(2020)Neufeld, Papapantoleon, and
  Xiang]{Neufeld_Papapantoleon_Xiang_2020}
A.~Neufeld, A.~Papapantoleon, and Q.~Xiang.
\newblock Model-free bounds for multi-asset options using option-implied
  information and their exact computation.
\newblock Preprint, arXiv:2006.14288, 2020.

\bibitem[Pe{\~n}a et~al.(2010)Pe{\~n}a, Vera, and
  Zuluaga]{Pena_Vera_Zuluaga_2010}
J.~Pe{\~n}a, J.~C. Vera, and L.~F. Zuluaga.
\newblock Static-arbitrage lower bounds on the prices of basket options via
  linear programming.
\newblock \emph{Quant. Finance}, 10:\penalty0 819--827, 2010.

\bibitem[Puccetti et~al.(2016)Puccetti, R{\"u}schendorf, and
  Manko]{puccetti2016}
G.~Puccetti, L.~R{\"u}schendorf, and D.~Manko.
\newblock {VaR} bounds for joint portfolios with dependence constraints.
\newblock \emph{Depend. Model.}, 4:\penalty0 368--381, 2016.

\bibitem[Rachev and R{\"u}schendorf(1994)]{Rachev_Rueschendorf_1994}
S.~T. Rachev and L.~R{\"u}schendorf.
\newblock Solution of some transportation problems with relaxed or additional
  constraints.
\newblock \emph{SIAM J. Control Optim.}, 32:\penalty0 673--689, 1994.

\bibitem[{Rapuch} and {Roncalli}(2004)]{RR01}
G.~{Rapuch} and T.~{Roncalli}.
\newblock {Dependence and two-asset option pricing}.
\newblock \emph{J. Comput. Finance}, 7\penalty0 (4):\penalty0 23--33, 2004.

\bibitem[R{\"u}schendorf(1980)]{Rueschendorf_1980}
L.~R{\"u}schendorf.
\newblock Inequalities for the expectation of {$\Delta$}-monotone functions.
\newblock \emph{Z. Wahrsch. Verw. Gebiete}, 54:\penalty0 341--349, 1980.

\bibitem[R\"{u}schendorf(2004)]{Rueschendorf_2004}
L.~R\"{u}schendorf.
\newblock Comparison of multivariate risks and positive dependence.
\newblock \emph{J. Appl. Probab.}, 41:\penalty0 391--406, 2004.

\bibitem[Rüschendorf(2018)]{Rueschendorf_2018}
L.~Rüschendorf.
\newblock Risk bounds with additional information on functionals of the risk
  vector.
\newblock \emph{Depend. Model.}, 6:\penalty0 102--113, 2018.

\bibitem[S{\ae}b\o(2009)]{Saebo_2009}
K.~K. S{\ae}b\o.
\newblock Pricing exotic options with the normal inverse {G}aussian market
  model using numerical path integration.
\newblock Master's thesis, NTNU, 2009.

\bibitem[Sklar(1959)]{sklar}
M.~Sklar.
\newblock Fonctions de repartition a $n$-dimensions et leurs marges.
\newblock \emph{Publ. Inst. Statist. Univ. Paris}, 8:\penalty0 229--231, 1959.

\bibitem[Tankov(2011)]{tankov}
P.~Tankov.
\newblock Improved {Fr\'echet} bounds and model-free pricing of multi-asset
  options.
\newblock \emph{J. Appl. Probab.}, 48:\penalty0 389--403, 2011.

\bibitem[Tavin(2015)]{Tav15}
B.~Tavin.
\newblock Detection of arbitrage in a market with multi-asset derivatives and
  known risk-neutral marginals.
\newblock \emph{J. Banking Finance}, 53:\penalty0 158--178, 2015.

\end{thebibliography}
\end{document}